\DeclareRobustCommand{\em}{%
  \@nomath\em \if b\expandafter\@car\f@series\@nil
  \normalfont \else \bfseries\itshape \fi}
\newtheorem{theorem}{Theorem}[section]
\newtheorem{corollary}[theorem]{Corollary}
\newtheorem{lemma}[theorem]{Lemma}
\newtheorem{proposition}[theorem]{Proposition}
\theoremstyle{definition}
\newtheorem{definition}[theorem]{Definition}
\newtheorem{example}[theorem]{Example}
\theoremstyle{remark}
\newtheorem{remark}[theorem]{Remark}
\numberwithin{equation}{section}
\numberwithin{figure}{section}
\newcommand{\B}[1]{{\mathbf #1}}
\newcommand{\C}[1]{{\mathcal #1}}
\newcommand{\OP}{\operatorname}
\def\L{\C L}
\def\oDelta{\Delta^\circ}
\def\oM{M^\circ}
\def\RR{\mathbb{R}}
\def\supp{\OP{supp}}
\newcommand{\xto}[1]{\xrightarrow{#1}}
\begin{document}

\title{Minimum Cost Super-Hedging in a Discrete Time Incomplete Multi-Asset Binomial Market}
\author{Jarek K\k{e}dra}
\address{University of Aberdeen and University of Szczecin}
\email{kedra@abdn.ac.uk}
\author{Assaf Libman}
\address{University of Aberdeen}
\email{a.libman@abdn.ac.uk}
\author{Victoria Steblovskaya}
\address{Bentley University}
\email{vsteblovskay@bentley.edu}

\keywords{Discrete time model; hedging; martingale measures; super-modular functions.}
\subjclass{91G15; 91G20; 91G60; 91G80; 90C05; 90C27}

\begin{abstract}
We consider a multi-asset incomplete model of the financial market, where each of $m\geq 2$ risky assets follows the binomial dynamics, and no assumptions are made on the joint distribution of the risky asset price processes. We provide explicit formulas for the minimum cost super-hedging strategies for a wide class of European type multi-asset contingent claims. This class includes European basket call and put options, among others. Since a super-hedge is a non-self-financing arbitrage strategy, it produces non-negative local residuals, for which we also give explicit formulas. This paper completes the foundation started in previous work of the authors for the extension of our results to a more realistic market model.
\end{abstract}

\maketitle

\section{Introduction and the main result}\label{S:intro}

\subsection*{Overview}
The present paper is concerned with a discrete-time market model with one
risk-free asset $S_0$ and $m$ risky assets (e.g. stocks) $S_1,\ldots, S_m$.  Each risky asset price $S_i$ follows an $n$-step binomial model with parameters $0<D_i<R<U_i$, for $i=1,\ldots,m$. Here $R=1+r$ and $r$ is a periodic risk-free interest rate. More precisely, the asset price processes for the time steps $k=0,1,\ldots,n-1$
are given by
\begin{eqnarray*}\label{D:def_stock_price}
&& S_0(k+1)=S_0(k)\cdot R\quad \text{and}\quad
\\
&& S_i(k+1)=S_i(k)\cdot \psi_i,
\end{eqnarray*}
where the stock price ratios $\psi_i\in \{D_i,U_i\}$, $i=1,\ldots,m$ and the initial values $S_i(0)$ are given for all assets. No assumptions are made on the joint distribution of the stock price processes.

In this market, we consider a European type multi-asset contingent claim $F$ with maturity time $n$. 
It is well-known (see, e.g., \cite[Section 1.5.]{pliska}) that such a market model is incomplete for $m\ge 2$. There is no unique no-arbitrage price for $F$ at any time $k=0,\dots,n-1$, but rather an open interval of no-arbitrage prices $(C_{\min}(F,k),C_{\max}(F,k))$ at time $k=0,\ldots,n-1$. Each point in this interval is obtained as a discounted conditional expectation of $F$ with respect to a particular martingale measure. 
The boundaries of the interval are obtained by means of special extremal martingale measures. 

The boundaries $C_{\min}(F,k)$ and $C_{\max}(F,k)$ as well as the corresponding extremal hedging strategies are of great interest, both theoretically and for practical applications. However, they are hard to compute for a general contingent claim. More precisely, they may be efficiently computed by standard Linear Programming techniques for a single-step model with a small number of assets. However, if the number of time steps grows then applying such algorithms become computationally infeasible. See \cite[Section 4]{MR4553406} for more details on computations of the boundaries $C_{\min}(F,k)$ and $C_{\max}(F,k)$.

Practitioners are interested in closed form formulas that are easy to implement. Such formulas were obtained in our previous paper \cite{MR4553406} for the boundaries of a no-arbitrage price interval for a large class of European contingent claims, both path-dependent and path-independent. This class includes 
European basket call and put options, as well as Asian arithmetic average options. We review some of these results in Section \ref{SS:c-max}.

The boundaries $C_{\min}(F,0)$ and $C_{\max}(F,0)$ give rise to extremal hedging strategies. In particular, one can build a minimum cost super-hedge with the starting capital $C_{\max}(F,0)$. In this paper, we provide explicit formulas for such a super-hedging strategy at time $k=0,\ldots,n-1$. Notice that such hedging strategy is a non-self-financing arbitrage strategy. At each time step, the strategy produces a non-negative local residual. We provide explicit formulas for these local residuals as well. We consider the same large class of European contingent claims as in \cite{MR4553406}, but restrict our studies to the path-independent case only.

\subsection*{Context and existing results}
We see a motivation for our studies of this market model in potential extension of our results to the case of a more realistic multi-asset market model where the stock price ratio $\psi_i$, is no longer binomial, but rather is distributed over a bounded interval $[D_i,U_i]$, $i=1,\ldots,m$. In such extended binomial model, the explicit formulas derived in this paper may be used to build a proxy non-self-financing hedging strategy that uses a market price of a contingent claim as a starting set-up capital. See e.g. \cite{JKS-risk,MR3612260} and references therein for a similar approach in the case of a single-asset extended binomial model.

Before we formulate our main result, we briefly describe the existing relevant research of the multi-asset binomial model. A market model of this type was considered by many authors from various perspectives. In \cite{MR1611839}, a geometric approach was first applied and a general formula for the upper bound of the no-arbitrage option price interval was obtained in terms of an iteration of a certain transformation within the space of the Borel real valued measurable functions on $\RR^m$. However, explicit formulas were obtained only for a particular case of a two-asset single-step binomial market. The authors focused on self-financing  super-hedging strategies. They identified a minimum-cost self-financing super-hedge in general terms. 

Further, in \cite{MR2283328}, explicit formulas were obtained for the boundaries $C_{\min}(F,k)$ and $C_{\max}(F,k)$ in the two-asset multi-step binomial market, with no assumptions made on the joint distribution of the stock price ratios. The extremal non-self-financing hedging strategies were built explicitly for the same two-asset case. The results are extended to the case where the stock price ratios are distributed over a closed rectangle and the pay-off function is convex.

\subsection*{This work}
Let us now introduce some necessary definitions and formulate our main results.
Throughout we fix an $n$-step discrete time market model with risk-free asset $S_0$ and risky assets $S_1,\dots,S_m$ following the binomial model with price ratios $\psi_i\in \{D_i,U_i\}$, $0<D_i<R<U_i$, $i=1,\ldots,m$, as we described above.
\begin{definition}\label{D:class}
	A contingent claim $F$ is said to be of {\em class $\mathfrak{S}$} if 
	\begin{equation}\label{Eq:payoff}
		F = h\left( \sum_{i=0}^m \gamma_i S_i(n)\right),
	\end{equation}
	where $h\colon \RR\to \RR$ is a convex function, and 
	$\gamma_1,\dots,\gamma_m \geq 0$ (there is no restriction on $\gamma_0$).
\end{definition} 
The class $\mathfrak{S}$ includes European basket call and put options among other widely used contingent claims.
\begin{example}
In the case of a European basket call option with the strike price $K$, 
	 $h(x) = \max\,\{\,x\, , \,  0 \,\}$. In the case of a European basket put option with the strike price $K$, 
	 $h(x) = \max\,\{\,-x\, , \,  0 \,\}$. In both cases, $x=\sum_{i=0}^m \gamma_i S_i(n)$,
	 where $\gamma_i\geq 0$ for $i=1,2,\ldots,m$, $\sum_{i=1}^m \gamma_i=1$, and $\gamma_0=-\frac{K}{S_0(0)R^n}$.
	 \end{example}

\begin{definition}\label{defn:hedging strategy}
	A {\em hedging strategy} for a contingent claim $F$ is a sequence of vector valued random variables 
	$$
	\alpha(k)=(\alpha_0(k),\alpha_1(k),\ldots,\alpha_m(k)),
	$$ for $0 \leq k \leq  n-1$, whose component $\alpha_i(k)$, represents the weight of the $i$-th asset in the hedging portfolio at time $k$  $(0 \leq k \leq  n-1)$. The portfolio is set up at time $k$ for the time interval $[k,k+1)$. The setup cost $V_{\alpha}(k)$ of the hedging portfolio  at time $k$ is defined as follows:
	\begin{equation}\label{D:def_setup_cost}
		V_{\alpha}(k)=\sum_{i=0}^m\alpha_i(k) S_i(k).
	\end{equation}

	A {\em minimum cost super-hedging strategy} for a contingent claim $F$ is such hedging strategy that the following two conditions hold: 
	\begin{itemize}
		\item {\em Super-hedging:} At time $k+1$, before the hedging portfolio is rebalanced, the hedging portfolio value is no less than the upper bound of the no-arbitrage contingent claim price interval. That is, for any time $0 \leq k \leq  n-1$:
		\begin{equation}
			\sum_{i=0}^m\alpha_i(k) S_i(k+1) \geq C_{\max}(F,k+1). 
			\label{E:hedge condition}
		\end{equation}
		\item {\em Minimum cost:} 
		The strategy $\alpha(k)$ has the minimum setup cost $V_{\alpha}(k)$ of the hedging portfolio at time $k$. That is, for any strategy 
		$$
		\beta(k)=(\beta_0(k),\beta_1(k),\ldots,\beta_m(k)),
		$$ 
		for $0 \leq k \leq  n-1$, which satisfies the super-hedging condition \eqref{E:hedge condition}, we have 
		$$
		V_{\alpha}(k) \leq V_{\beta}(k)
		$$
		for any time $0 \leq k \leq  n-1$, where $V_{\beta}(k)=\sum_{i=0}^m\beta_i(k) S_i(k)$
		is a setup cost of $\beta(k)$ at time $k$.
	\end{itemize}
\end{definition}

\begin{remark}
It follows from the Duality of Linear Programming \cite[Chapter 5]{MR1485773} (see also \cite[Appendix]{pliska} for the explicit application to hedging) that 
\begin{equation}
V_{\alpha}(k) = C_{\max}(F,k).
\label{Eq:setup-cost}
\end{equation}
We will prove it directly; see the proof of Theorem 
\ref{T:minimum cost hedgiging theorem} in Section \ref{S:proof}.
\end{remark}

In order to formulate our main results, we need to introduce (first informally, later on more precisely, see Section \ref{section:formal model}) the notion of ``the state of the world'' $\omega$ at time $k$.

\begin{definition}\label{defn:state of the world}{\bf (informal)}
The state of the world $\omega$ at time $k$ is  a $k$-tuple $\omega =(\lambda^1,\dots,\lambda^k)$ where each $\lambda^j$ is an $m$-tuple of $0$'s and $1$'s which represents the dynamics of the $m$ risky asset prices from time step $j-1$ to time step $j$. 
A ``$0$'' in the $i^\text{th}$ position of $\lambda^j$ corresponds to the $i^\text{th}$ stock price ratio $\psi_i(\lambda^j)=D_i$, and a ``$1$'' in the $i^\text{th}$ position of $\lambda^j$ corresponds to the stock price ratio $\psi_i(\lambda^j)=U_i$.
\end{definition}

Thus, the price of the $i^\text{th}$ stock at time $k$ at the given state of the world $\omega$ at time $k$ is
\begin{equation}\label{eqn:S_(k) formula with psi}
S_i(k)(\omega) = 
S_i(0) \cdot \psi_i(\lambda^1) \cdots \psi_i(\lambda^k),\,  1 \leq i \leq m 
\end{equation}
Let us define the ``risk-neutral probabilities''
\begin{equation}\label{E:def bi}
b_i=\frac{R-D_i}{U_i-D_i}, \qquad (i=1,\ldots,m).
\end{equation}
Notice that $0<b_i <1$ and that by reordering the risky assets $S_1,\dots,S_m$ if necessary, we may assume that
\begin{equation}\label{E:bi non increasing}
0<b_m \leq \dots \leq b_1<1.
\end{equation}
For every $0 \leq j \leq m$ set
\begin{equation}\label{E:def qj}
	q_j = \left\{
	\begin{array}{ll}
		1-b_1 & \text{if } j=0 \\
		b_j-b_{j+1} & \text{if } 1 \leq j \leq m-1 \\
		b_m       & \text{if } j=m.
	\end{array}\right.
\end{equation}
Notice that $0\le q_j <1$ and that $\sum_{j=0}^m q_j =1$.
For integers $0 \leq i,j, \leq m$ set
\begin{equation}\label{E:def chii}
\chi_0(j)=R \qquad \text{and} \qquad
\chi_i(j)=\left\{
\begin{array}{rr}
U_i & i \leq j \\
D_i & i>j,
\end{array}
i=1,\dots,m.
\right.
\end{equation}

It will be convenient to consider the set of all sequences of length $k$ of elements from $\{0,1,\ldots,m\}$,
\begin{equation}\label{eqn:Pkm}
\C P_k(m)=\{0,\dots,m\}^k = \{ (j_1,\dots,j_k) : 0 \leq j_i \leq m\}.
\end{equation}
For any $J \in \C P_k(m)$ we write $J=(j_1,\dots,j_k)$ and  denote
\begin{align}
\label{E:def qJ}
q_J &= \prod_{t=1}^k q_{j_t},
\\
\label{E:def chiJ}
\chi_i(J)&= \prod_{t=1}^k \chi_i(j_t) \qquad (0 \leq i \leq m).
\end{align}
    
Our first result gives an explicit formula for $C_{\max}(F,k)$ at the state of the world $\omega=(\lambda^1,\dots,\lambda^k)$ at time $k$.
It should be compared with \cite[Section 6]{MR4553406}.
We stress that the assumption \eqref{E:bi non increasing} is in force.

\begin{proposition}\label{P:Cmax formula}
Let $F$ be a contingent claim of class $\mathfrak{S}$ in the $n$-step market model, where $n\geq1$, given by \eqref{Eq:payoff}.
For any $0 \leq k \leq n-1$ and any state of the world $\omega=(\lambda^1,\dots,\lambda^k)$
\begin{equation}\label{E:Cmax formula}
C_{\max}(F,k)(\omega) \, =  \,
R^{k-n} \sum_{J \in \C P_{n-k}(m)} q_J\cdot h\left( \sum_{i=0}^m \gamma_i \chi_i(J) S_i(k)(\omega) \right).
\end{equation}
\end{proposition}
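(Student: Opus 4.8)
The plan is to argue by backward induction on $k$, combining the one-step dynamic programming recursion for the super-replication price with a single-step optimization driven by super-modularity. Recall that $C_{\max}(F,k)$ is the supremum of the discounted expected payoff over all martingale measures; factoring such a measure into its one-step transitions and using the tower property yields the recursion
\begin{equation*}
C_{\max}(F,k)(\omega) = R^{-1}\,\sup_{Q}\, E_Q\!\left[\,C_{\max}(F,k+1)\,\middle|\,\omega\,\right],
\end{equation*}
where $Q$ runs over the one-step martingale measures for the transition from the state $\omega$ at time $k$. Since a one-step measure lives on the $2^m$ sign patterns of the move $\lambda^{k+1}$ and the martingale condition $E_Q[\psi_i]=R$ forces the marginal $Q(\psi_i=U_i)=b_i$ for each $i$, these measures are exactly the couplings of $m$ Bernoulli laws with the prescribed parameters $b_i$ of \eqref{E:def bi}. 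For the base case $k=n$ the index set $\C P_0(m)$ is the single empty sequence, for which $q_J$ and all $\chi_i(J)$ are empty products equal to $1$, so the right-hand side of \eqref{E:Cmax formula} collapses to $h\bigl(\sum_i\gamma_i S_i(n)(\omega)\bigr)=C_{\max}(F,n)(\omega)$.

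The heart of the argument is to identify the optimal one-step measure once the inductive hypothesis is inserted. Evaluated after a move $\psi=(\psi_1,\dots,\psi_m)$, the random variable $C_{\max}(F,k+1)$ becomes, by the hypothesis, the sum over $J\in\C P_{n-k-1}(m)$ of the terms $R^{k+1-n} q_J\, h\bigl(\sum_{i=0}^m \gamma_i\,\chi_i(J)\,S_i(k)(\omega)\,\psi_i\bigr)$, where I write $\psi_0=R$ for the deterministic risk-free move. Viewed as a function of $\psi$ on the lattice $\prod_{i=1}^m\{D_i,U_i\}$, each term is super-modular: the coefficients $\gamma_i\chi_i(J)S_i(k)(\omega)$ are non-negative (here $\gamma_i\ge 0$ enters decisively) and $h$ is convex, so for $a,b\ge 0$ one has $h(s+a+b)+h(s)\ge h(s+a)+h(s+b)$, which is exactly the lattice super-modularity inequality. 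A non-negatively weighted sum of super-modular functions is again super-modular, and the expectation of a super-modular function over couplings with fixed marginals is maximized by the comonotonic coupling (the Fr\'echet--Hoeffding upper bound; for two-point marginals this is elementary and may also be verified directly). Crucially, the same comonotonic measure maximizes every summand at once, so it maximizes the whole expectation.

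It remains to recognize this comonotonic measure and to reassemble the formula. Because $b_1\ge\cdots\ge b_m$ by \eqref{E:bi non increasing}, the comonotonic coupling has nested up-events and is supported on the $m+1$ monotone scenarios $j_0\in\{0,\dots,m\}$ in which assets $1,\dots,j_0$ move up and the rest move down; a telescoping computation gives its weights $q_{j_0}$ and the marginals $\sum_{j_0\ge i} q_{j_0}=b_i$, confirming that it is a genuine martingale measure. In scenario $j_0$ one has $S_i(k+1)=\chi_i(j_0)S_i(k)(\omega)$ for every $i$ (with $\chi_0(j_0)=R$), so the recursion yields
\begin{equation*}
C_{\max}(F,k)(\omega) = R^{k-n}\sum_{j_0=0}^{m}\ \sum_{J\in\C P_{n-k-1}(m)} q_{j_0}\,q_J\; h\!\left(\sum_{i=0}^m \gamma_i\,\chi_i(j_0)\,\chi_i(J)\,S_i(k)(\omega)\right).
\end{equation*}
The concatenation $(j_0,J)\mapsto J'=(j_0,j_1,\dots,j_{n-k-1})$ is a bijection from $\{0,\dots,m\}\times\C P_{n-k-1}(m)$ onto $\C P_{n-k}(m)$ under which $q_{j_0}q_J=q_{J'}$ and $\chi_i(j_0)\chi_i(J)=\chi_i(J')$ by \eqref{E:def qJ}--\eqref{E:def chiJ}; it converts the double sum into the single sum of \eqref{E:Cmax formula}, closing the induction. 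The step I expect to be the main obstacle is the single-step maximization of the second paragraph: one must justify both the super-modularity of the payoff in the one-step moves and, rigorously, that the comonotonic martingale measure is the maximizer over all admissible couplings, uniformly across the summands indexed by $J$.
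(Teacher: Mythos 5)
Your proposal is mathematically sound in outline but follows a genuinely different route from the paper. The paper does not run a backward induction inside this proof: it reduces Proposition~\ref{P:Cmax formula} to Theorem~\ref{theorem:C_max with supervertex}, which in turn conditions the whole $n$-step problem on $\overline{\omega}$ (via Lemma~\ref{lem:conditional oM} and Corollary~\ref{cor:oM(L,b) conditional surjective}) and then invokes the multi-step extremal result, Theorem~\ref{theorem:supermodular vertex maximium} (cited from the authors' earlier work), asserting that $E_p(f)$ over $M_{n-k}(\L;b)$ is maximized at the product measure $q^{\otimes n-k}$ for any fibrewise supermodular $f$; the explicit sum over $\C P_{n-k}(m)$ is then just the evaluation of $E_{q^{\otimes n-k}}$. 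You instead re-derive everything from a one-step dynamic programming recursion plus the single-step comonotonic maximization. What your approach buys is self-containedness at the level of this proposition; what the paper's approach buys is that the only hard input is quarantined in one black-box theorem, and the conditioning/gluing bookkeeping is done once, carefully, in Section~\ref{S:finite}. Your key single-step fact --- that on $\{0,1\}^m$ with prescribed Bernoulli marginals $b_i$ the comonotonic coupling maximizes the expectation of every supermodular function --- is precisely the $n=1$ case of Theorem~\ref{theorem:supermodular vertex maximium}, so you are ultimately leaning on the same nontrivial result; for $m\ge 3$ it is the multivariate Lorentz/Fr\'echet--Hoeffding theorem and is not a two-line verification, as you rightly flag.

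Two points need shoring up before this would be a complete proof. First, $C_{\max}$ is defined in \eqref{eq:Cmax sup formula} as a supremum over \emph{non-degenerate} (risk-neutral) measures, whereas the comonotonic vertex $q$ is highly degenerate (it charges only the $m+1$ scenarios $\rho_0,\dots,\rho_m$ out of $2^m$). Your recursion silently passes from the open set of admissible couplings to a maximizer on its boundary; one must prove that the supremum over $\oM$ equals the maximum over the closed polytope $M$, which the paper does via the density argument of Proposition~\ref{prop:oM(L,b) not empty} together with continuity of $p\mapsto E_p(X)$ and compactness. Second, the dynamic programming identity $C_{\max}(F,k)(\omega)=R^{-1}\sup_Q E_Q[C_{\max}(F,k+1)\mid\omega]$ needs both inequalities: that conditioning any $n$-step martingale measure at a node produces an admissible one-step kernel and an admissible continuation (Lemma~\ref{lem:conditional oM}), and conversely that a one-step kernel at $\omega$ can be glued with (near-)optimal continuations at each successor into a genuine global martingale measure --- nontrivial because an $n$-step martingale measure is not a product of one-step measures in general, and because the continuations must again be approximated by non-degenerate ones (Lemma~\ref{lem:tensor M psi c}, Corollary~\ref{cor:oM(L,b) conditional surjective}). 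Neither issue is fatal, and the rest of your argument --- the supermodularity of $\lambda\mapsto C_{\max}(F,k+1)(\omega\lambda)$ from convexity of $h$ and non-negativity of $\gamma_i\chi_i(J)S_i(k)(\omega)$, the identification of the comonotonic weights with $q_0,\dots,q_m$ under the ordering \eqref{E:bi non increasing}, and the concatenation bijection reassembling the double sum --- is correct and matches the paper's formulas.
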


Define $m$-tuples $\rho_0,\dots,\rho_m$ 
\begin{equation}\label{E:def rhoj}
	\rho_j=(\underbrace{1,\dots,1}_{\text{$j$ times}},0,\dots,0).
\end{equation}
Thus, $\rho_j$ describes the dynamics of the risky asset prices where the prices of the first $j$ assets $S_1,\dots,S_j$ went up and the prices of the remaining $m-j$ assets $S_{j+1},\dots,S_m$ went down.

Given the state of the world $\omega=(\lambda^1,\dots,\lambda^k)$ at time $0 \leq k \leq n-1$ (see Definition \ref{defn:state of the world}) and given $0 \leq j \leq m$ we denote by $\omega\rho_j=(\lambda^1,\dots,\lambda^k,\rho_j)$ the state of the world at time $k+1$, where the risky asset price dynamics up to time $k$ is described by $\omega$, and their price dynamics from time $k$ to time $k+1$ is described by $\rho_j$.

The main result of this paper is the next theorem that gives an explicit formula for the minimum cost super-hedging strategy for a contingent claim of class $\mathfrak{S}$.

\begin{theorem}\label{T:minimum cost hedgiging theorem}
Let $F$ be a contingent claim of class $\mathfrak{S}$ in an $n$-step discrete time binomial market given by \eqref{Eq:payoff}.
Then there exists a minimum cost super-hedging strategy $\alpha$.
At any state of the world $\omega$ at time $0 \leq k<n$ the components of $\alpha$ are given by
\begin{equation}\label{E:solution}
\begin{pmatrix}
\alpha_0(k)(\omega)\\
\alpha_1(k)(\omega)\\
\vdots\\
\alpha_m(k)(\omega)\\
\end{pmatrix} 
=
T_k(\omega)^{-1}\cdot N^{-1}\cdot Q\cdot
\begin{pmatrix}
C_{\max}(F,k+1)(\omega\rho_0)\\
C_{\max}(F,k+1)(\omega\rho_1)\\
\vdots\\
C_{\max}(F,k+1)(\omega\rho_m)\\
\end{pmatrix},
\end{equation}
where $T_k(\omega),N,Q$ are the following $(m+1)\times(m+1)$ matrices 
\begin{eqnarray}\label{D:def_matrices}
Q&=&\left(
\begin{smallmatrix*}[r]
1  &  0 & 0 & \cdots \cdots & 0 & 0 \\
-1 &  1 & 0 & \cdots \cdots & 0 & 0 \\
0  & -1 & 1 & \cdots \cdots & 0 & 0 \\
\vdots & \vdots & \vdots & \cdots \cdots &  & \vdots \\
0 &   0 & 0 & \cdots \cdots & 1 & 0 \\
0 &   0 & 0 & \cdots \cdots & -1 & 1 \\
\end{smallmatrix*}
\right),
\\
\nonumber
N &=& 
\left(
\begin{smallmatrix*}[r]
1 & D_1       & D_2      & \cdots    & D_m \\
0 & U_1-D_1   & 0        & \dots     & 0                \\
0 & 0         & U_2-D_2  &           & 0                \\
\vdots &      &          & \ddots    & \vdots           \\
0 &  0        & 0        &           &    U_m-D_m  
\end{smallmatrix*}
\right),
\\
\nonumber
T_k(\omega)&=&
\left(
\begin{smallmatrix*}[r]
R S_0(k)(\omega)    &            &                &  \\
              & S_1(k)(\omega) &                &  \\
              &            &  \ddots        &  \\
              &            &         & S_m(k)(\omega)  
\end{smallmatrix*}
\right).
\end{eqnarray}
Moreover, at any time $k$ the setup cost of the hedging portfolio is equal exactly to $C_{\max}(F,k)$.
That is, for any state of the world $\omega$ at time $k$
\[
V_\alpha(k)(\omega) = C_{\max}(F,k)(\omega).
\]
\end{theorem}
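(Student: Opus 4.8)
The plan is to reduce the super-hedging problem at each fixed time $k$ and state of the world $\omega$ to a single-step optimization, and then verify that the proposed formula \eqref{E:solution} solves it. First I would observe that the super-hedging condition \eqref{E:hedge condition} constrains the portfolio $\alpha(k)(\omega)$ through its value at time $k+1$ over all possible one-step moves. Since there are $2^m$ possible realizations of the price ratios $(\psi_1,\dots,\psi_m)$ at the step from $k$ to $k+1$, the raw constraint \eqref{E:hedge condition} is a system of $2^m$ linear inequalities in the $m+1$ unknowns $\alpha_0(k),\dots,\alpha_m(k)$. The key structural fact I would exploit is that, because each $\gamma_i\ge 0$ for $i\ge 1$ and $h$ is convex, the function $C_{\max}(F,k+1)$ given by Proposition~\ref{P:Cmax formula} inherits convexity and monotonicity in the underlying prices, so only the $m+1$ ``extremal'' moves $\rho_0,\dots,\rho_m$ (defined in \eqref{E:def rhoj}) are binding. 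This collapses the $2^m$ inequalities to the $m+1$ constraints indexed by $\rho_0,\dots,\rho_m$, which I expect to be the main obstacle and the technical heart of the argument.

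Once the binding constraints are identified, I would set up the resulting linear program: minimize the setup cost $V_\alpha(k)=\sum_{i=0}^m \alpha_i(k)S_i(k)$ subject to the $m+1$ inequalities $\sum_{i=0}^m \alpha_i(k)S_i(k+1)(\omega\rho_j)\ge C_{\max}(F,k+1)(\omega\rho_j)$ for $0\le j\le m$. The next step is to rewrite these inequalities in matrix form. Using \eqref{eqn:S_(k) formula with psi} and \eqref{E:def chii}, the price $S_i(k+1)$ under the move $\rho_j$ equals $\chi_i(j)S_i(k)(\omega)$ for $i\ge 1$ and $RS_0(k)(\omega)$ for $i=0$; this is precisely what the diagonal matrix $T_k(\omega)$ encodes, and the matrix $N$ captures the $\chi_i(j)$-pattern of ups and downs across the $\rho_j$. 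I would verify that the constraint matrix factors as $N^{\mathsf T}T_k(\omega)$ (or the appropriate transpose arrangement), so that after a change of variables the feasibility region becomes a simple orthant-type region.

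With the system in this form, the strategy is to show that the optimum is attained when all $m+1$ constraints hold with equality, turning the inequalities into the linear system whose solution is exactly \eqref{E:solution}. The role of $Q$ is to take successive differences of the $C_{\max}(F,k+1)(\omega\rho_j)$ values; I would check that $QC$ has non-negative entries (this is where convexity of $h$ and the ordering \eqref{E:bi non increasing} enter again), which guarantees that the unique solution of the equality system is genuinely cost-minimizing rather than merely a critical point. To confirm optimality rigorously I would invoke linear programming duality (or exhibit an explicit dual-feasible certificate using the probabilities $q_j$ from \eqref{E:def qj}), showing that the dual objective equals the primal value $C_{\max}(F,k)(\omega)$. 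Finally, substituting the solution \eqref{E:solution} back into $V_\alpha(k)=\sum_{i=0}^m\alpha_i(k)S_i(k)$ and simplifying via Proposition~\ref{P:Cmax formula} yields the claimed identity $V_\alpha(k)(\omega)=C_{\max}(F,k)(\omega)$, which simultaneously proves the minimum-cost property and establishes \eqref{Eq:setup-cost} directly, as promised in the remark.
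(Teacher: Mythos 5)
Your overall architecture matches the paper's: impose equality in \eqref{E:hedge condition} at the $m{+}1$ extremal moves $\rho_0,\dots,\rho_m$ of \eqref{E:def rhoj}, factor the coefficient matrix of the resulting square system through $T_k(\omega)$, $N$ and $Q$ to obtain \eqref{E:solution}, and certify minimality by the dual-feasible weights $q_j/R$ from \eqref{E:def qj}, which satisfy $\sum_{j} q_j\,\psi_i(\rho_j)=R$ and hence give $R\,V_\beta(k)(\omega)\ \geq\ \sum_{j} q_j\, C_{\max}(F,k+1)(\omega\rho_j)\ =\ R\, C_{\max}(F,k)(\omega)$ for every super-hedge $\beta$; the tower-property computation behind the last equality and the evaluation $V_\alpha(k)(\omega)=C_{\max}(F,k)(\omega)$ go through as you describe. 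This is, in substance, the paper's optimality argument phrased in LP language.

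There is, however, a genuine gap at exactly the step you flag as the ``technical heart'': proving that the portfolio determined by the $m{+}1$ equalities satisfies the remaining $2^m-(m{+}1)$ inequalities of \eqref{E:hedge condition}. You attribute this to ``convexity and monotonicity of $C_{\max}(F,k+1)$ in the underlying prices,'' but monotonicity is false in general (take $h(x)=\max\{-x,0\}$, a basket put, whose value decreases as more assets move up), and convexity in the price variables alone does not imply that the affine function agreeing with $\lambda\mapsto C_{\max}(F,k+1)(\omega\lambda)$ along the chain $\rho_0\prec\dots\prec\rho_m$ dominates it at the other $2^m-(m{+}1)$ vertices of the cube. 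The property that actually does the work is \emph{supermodularity} of $\Xi_\omega(\lambda)=C_{\max}(F,k+1)(\omega\lambda)$ on the lattice $\L=\{0,1\}^m$ (via Proposition \ref{P:supermodular facts}\ref{P:supermodular:convex-modular} and \ref{P:supermodular:linearity}, using $\gamma_i\geq 0$ and the ordering \eqref{E:bi non increasing}), played against the \emph{modularity} of the affine portfolio value: if the super-hedge failed at some $\lambda$, one chooses $\lambda$ with $j$ maximal such that $\rho_j\preceq\lambda$, sets $\lambda'=\lambda\vee\rho_{j+1}$, and the inequality \eqref{D:def_supermodular_fun} applied to the pair $(\lambda,\rho_{j+1})$ propagates the failure to $\lambda'$, contradicting maximality of $j$. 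Without an argument of this kind the collapse from $2^m$ constraints to $m{+}1$ is unproved. A smaller point: your proposed check that $Q$ applied to the vector of $C_{\max}(F,k+1)(\omega\rho_j)$ has non-negative entries is neither needed for optimality (the non-negative dual weights $q_j/R$ with all $m{+}1$ constraints tight already certify it) nor true in general (again, puts), so it should be dropped.
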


\begin{remark}
The minimum cost super-hedging strategy $\alpha$ is non-anticipating: the components $\alpha_0(k),\dots,\alpha_m(k)$ of the hedging portfolio at time $k$ are computed by means of {\em only} the prices of the assets $S_0,\dots,S_m$ at time $k$, i.e $S_0(k),\dots,S_m(k)$.
Indeed, the matrix $T_k$ only depends on the ``current prices'' at time $k$. Moreover, the upper bound of the no-arbitrage contingent claim price interval $C_{\max}(F,k+1)(\omega\rho_j)$ at time $k+1$ is computed by means of the ``current prices'' of $S_i$ at time $k$. Notice that by \eqref{eqn:S_(k) formula with psi} $S_i(k+1)(\omega \rho_j) = S_i(k)(\omega) \cdot \chi_i(j)$, so Proposition \ref{P:Cmax formula} implies
\[
C_{\max}(F,k+1)(\omega\rho_j) = 
R^{k+1-n} \sum_{J \in \C P_{n-k-1}(m)} q_J\cdot h\left( \sum_{i=0}^m \gamma_i \chi_i(J)\chi_i(j) \cdot  S_i(k)(\omega) \right).
\]
\end{remark}


\subsection*{Organization of the paper}

In Section \ref{S:finite} we recall necessary definitions and facts related to finite probability spaces and adapt them to the context of this paper. We also introduce some model specific definitions and prove necessary propositions.   

In Section \ref{section:formal model} we introduce a formal model for our discrete time multi-asset Binomial model.

In Sections \ref{section:supermodular} and \ref{section:RN C_max} we provide an overview of results obtained in our previous paper \cite{MR4553406}. However, we adapt the terminology to the context of this paper and provide a more thorough proofs of main propositions. In Section \ref{section:supermodular}, we introduce important definitions: a supermodular random variable and a supermodular vertex (a special measure). Theorem \ref{theorem:supermodular vertex maximium} provides a crucial tool for pricing contingent claims in our model. In Section \ref{section:RN C_max} we describe the set of risk-neutral and martingale measures and prove Proposition \ref{P:Cmax formula} which provides an explicit formula for the upper bound of a no-arbitrage price interval for a contingent claim of class $\mathfrak{S}$ .

The proof of Theorem \ref{T:minimum cost hedgiging theorem} is presented in Section \ref{S:proof}. In Section \ref{S:Residuals}, we introduce local and accumulated residuals of the minimum cost super-hedging strategy and provide explicit formulas for them. In Section \ref{S:example}, we work out a simple example of a $2$-step model with two risky assets in which we evaluate all crucial formulas from the paper, first abstractly, then with concrete numbers.
We conclude the paper with Section \ref{S:conclusion}  where we summarize our findings and discuss a more realistic market model for which our present studies build a foundation.

\section{Finite probability spaces}\label{S:finite}


In this section we recall a number of basic facts about finite
probability spaces and provide necessary definitions which will be needed in the subsequent sections.

Let $(A,p)$ be a finite probability space. That is, $A$ is a finite
set and $p\colon 2^A\to [0,1]$ is a probability measure defined on
all subsets of $A$. The measure defines a function (denoted by 
abuse of notation by the same symbol) $p\colon A\to [0,1]$ 
such that
\begin{equation}\label{Eq:pdf}
\sum_{a\in A} p(a) =1.
\end{equation}
We will refer to both as probability measures\footnote{
There is no standard terminology here. For example, Billingsley calls the value $p(a)$ a mass
\cite[Example 2.9 and 10.2]{MR1324786} and Shreve \cite{MR2049045} uses the term probability measure for both the
measure and the function.}.
Conversely, any function as in \eqref{Eq:pdf} defines a probability measure by
$$
p(B) = \sum_{a\in B}p(a),
$$
where $B\subseteq A$ is an event. 
Let 
$$
\Delta(A) = \left\{p\colon A\to [0,1]\ |\ \sum_{a\in A}p(a)=1\right\}
$$
denote the set of all probability measures on $A$. 
Let $n\in \B N$ be the number of elements in $A$.
Notice that
$\Delta(A)\subseteq \RR^{n}$ is a standard simplex and, in particular,
it is a compact set.

We say that $p\in \Delta(A)$ is {\em non-degenerate} if $p(a)>0$ for all $a\in A$.
This is the case if and only if the only event of probability zero is
the empty set. The set of non-degenerate probability measures
is denoted by $\Delta^{\circ}(A)$.

Once $p \in \Delta(A)$ is fixed, a random variable on $A$ is merely a function
$X \colon A \to \RR$.  For this reason we will refer to any function $X \colon
A \to \RR$ as a {\em random variable} even when the probability measure $p$ is
not specified. The vector space of all functions $X\colon A\to \RR$ will be denoted
by $\RR ^A$. It is isomorphic to $\RR^n$, where $n$ is the number of elements
of the set $A$. Later on we will consider it equipped with the standard inner
product.
A function $X \colon A \to \RR^m$ is called a {\em random
vector} with components $X_i \colon A \to \RR$, $i=1,2,\ldots,m$.

Suppose $p \in \Delta(A)$ and $B \subseteq A$ and $X \colon A \to \RR^m$.
Define
\[
\int_B X \, d p = \sum_{b \in B} X(b) p(b)\in \RR^m.
\]
With this notation, the expectation of $X$ with respect to $p$ is:
\[
E_p(X) = \sum_{a \in A} X(a) p(a) = \int_A X \, dp.
\]
Observe that once a random vector $X$ is fixed the function
\[
E_{-}(X) \colon \Delta(A) \xto{ \ p \mapsto E_p(X) \ } \RR^m
\]
is the restriction of a linear function $\RR^A \to \RR^m$.
In particular, it is continuous.

\subsection{Product and conditional probabilities}\label{subsec:product and conditional prob}
Let $p \in \Delta^{\circ}(A)$ be a non-degenerate probability measure.
Recall that for any non-empty $B\subseteq A$ the conditional probability $p(-|B)\colon 2^A\to [0,1]$ 
is defined by
$$
p(C|B) = \frac{p(C\cap B)}{p(B)}.
$$
Notice that $p(-|B)$ can be considered as a probability measure on $B$ and it 
will be denoted by $p_{|B}\colon B\to [0,1]$. It is given by
$$
p_{|B}(b) = \frac{p(b)}{p(B)}.
$$
In particular, $p_{|B}\in \Delta(B)$.
In this setup the conditional expectation of a random vector $X \colon A \to \RR^m$
is given by
\[
E_p(X|B) = \int_A X \, dp_{|B} = \int_B X \, dp_{|B} = \tfrac{1}{p(B)} \int_B X\, dp.
\]

The sample spaces we deal with in the present paper are multiple Cartesian products
and we will most often need conditional expectations with respect to their factors.
For this reason we will introduce the following, somewhat arbitrary, notation.

Let $A',A''$ be finite sets.
If $B \subseteq A'$ we write
\begin{equation}\label{eqn:bar notation for events}
\overline{B} = B \times A''\subseteq A'\times A''.
\end{equation}
If $a' \in A'$ we write $\overline{a'}=\overline{\{a'\}}=\{a'\} \times A'' \approx A''$,
where $\approx$ denotes the bijection $(a',a'') \mapsto a''$.

Let $X'\colon A'\to \RR^m$ and $X''\colon A''\to \RR^m$ be random vectors. 
They define a random vector $X' \otimes X''\colon A' \times A''\to \RR^m$
with components given by
\[
(X' \otimes X'')_i \ \colon \ A' \times A'' \xto{\ \  (a',a'') \mapsto X'_i(a') \cdot X''_i(a'') \ \ } \RR, \qquad i=1,\dots,m. 
\]
If $X \colon A' \times A'' \to \RR^m$ is a random vector and $a' \in A'$, 
we obtain a random vector
\[
X(a',-) \colon A'' \xto{ \ \ a'' \mapsto X(a',a'') \ \ }\RR^m. 
\]

\begin{lemma}\label{lem:products conditional general}
\hfill
\begin{enumerate}[label=(\roman*)]
\item
\label{lem:products conditional general:product}
If $p' \in \Delta(A')$ and $p'' \in \Delta(A'')$ then $p' \otimes p'' \in \Delta(A' \times A'')$.
It is the product measure.

\item
\label{lem:products conditional general:conditional}
Let $a' \in A'$ be fixed.
The conditional probability yields a function
\[
\oDelta(A' \times A'') \xto{\ p \mapsto p_{|\overline{a'}} \ } \oDelta(A'').
\]

\item
\label{lem:products conditional general:tensor-conditional}
Let $p' \in \oDelta(A')$ and $p'' \in \oDelta(A'')$.
Set $p=p' \otimes p''$.
Then $p_{|a'} = p''$ for any $a' \in A'$.

\item
\label{lem:products conditional general:E in stages}
Suppose $p' \in \Delta(A')$ and $p'' \in \Delta(A'')$ and $X$ a random vector on $A' \times A''$.
Let $Y$ be the random vector on $A'$ defined by $Y(a')=E_{p''}(X(a',-))$.
Then
\[
E_{p'}(Y)=E_{p' \otimes p''}(X).
\]
\item \label{lem:products conditional general:box restriction}
For any $p \in \oDelta(A' \times A'')$ , any random vector $X$ on $A' \times A''$ and any $a' \in A'$
\[
E_{p}(X|\overline{a'}) = E_{p_{|\overline{a'}}}(X(a',-)).
\]
\end{enumerate}
\end{lemma}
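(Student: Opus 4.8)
The plan is to derive all five parts directly from the definitions of the product measure, conditional probability, and conditional expectation recalled above, the only bookkeeping being the identification $\overline{a'}=\{a'\}\times A''\approx A''$ via $(a',a'')\mapsto a''$. Throughout I will use that, as a function, the product measure is $(p'\otimes p'')(a',a'')=p'(a')\,p''(a'')$.

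For part (i) I would observe that the total mass $\sum_{(a',a'')}p'(a')p''(a'')$ factors as $\big(\sum_{a'}p'(a')\big)\big(\sum_{a''}p''(a'')\big)=1$, so $p'\otimes p''$ satisfies \eqref{Eq:pdf} and is the product measure by construction. For part (ii), non-degeneracy of $p$ gives $p(\overline{a'})=\sum_{a''}p(a',a'')>0$, so $p_{|\overline{a'}}$ is defined, and under the bijection $\overline{a'}\approx A''$ its value $p_{|\overline{a'}}(a'')=p(a',a'')/p(\overline{a'})$ is strictly positive for every $a''$; hence $p_{|\overline{a'}}\in\oDelta(A'')$. Part (iii) is then a one-line substitution: inserting $p=p'\otimes p''$ into this formula, the factor $p'(a')$ cancels and $\sum_{b''}p''(b'')=1$, leaving $p_{|\overline{a'}}(a'')=p''(a'')$.

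For part (iv), the one computation I would actually write out is the discrete Fubini regrouping
\[
E_{p'}(Y)=\sum_{a'}\Big(\sum_{a''}X(a',a'')\,p''(a'')\Big)p'(a')=\sum_{(a',a'')}X(a',a'')\,p'(a')p''(a'')=E_{p'\otimes p''}(X),
\]
where the last step again uses $(p'\otimes p'')(a',a'')=p'(a')p''(a'')$. For part (v) I would expand the left side by the definition $E_p(X|\overline{a'})=\tfrac{1}{p(\overline{a'})}\sum_{a''}X(a',a'')\,p(a',a'')$ and the right side as $\sum_{a''}X(a',a'')\,p_{|\overline{a'}}(a'')$; substituting $p_{|\overline{a'}}(a'')=p(a',a'')/p(\overline{a'})$ from part (ii) makes the two agree term by term.

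None of these steps presents a genuine mathematical obstacle — the whole content is unwinding definitions. The only point requiring care is the notational identification $\overline{a'}\approx A''$ together with checking that each conditioning is well-defined, which is precisely where the non-degeneracy hypotheses in parts (ii), (iii) and (v) enter.
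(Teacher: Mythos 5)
Your proposal is correct and follows essentially the same route as the paper: parts (i) and (ii) are definition-unwinding, part (iii) is the same cancellation of the factor $p'(a')$, part (iv) is the same discrete Fubini regrouping, and part (v) is the same term-by-term expansion of the conditional expectation under the identification $\overline{a'}\approx A''$. Your treatment is in fact slightly more explicit than the paper's, which dismisses (i) and (ii) as immediate.
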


\begin{proof}
Parts \ref{lem:products conditional general:product} and \ref{lem:products conditional general:conditional} 
follow directly from definitions.
\\

\noindent
\ref{lem:products conditional general:tensor-conditional}
For any $a'' \in A'' \approx \{a'\} \times A''$,
\[
p_{|\overline{a'}}(a'') =p(a',a'')/p(\{a'\} \times A'') = p'(a')p''(a'')/p'(a')p''(A'') = p''(a'').
\]
\ref{lem:products conditional general:E in stages}
We calculate
\[
E_{p'}(Y) = \sum_{a' \in A'} Y(a')p'(a') =
\sum_{a' \in A'} \sum_{a'' \in A''} X(a',a'') p''(a'') p'(a')
=E_{p' \otimes p''}(X).
\]
\ref{lem:products conditional general:box restriction}
Since $\overline{a'}=\{a'\} \times A''$, we use part \ref{lem:products conditional general:tensor-conditional} to compute
\[
E_p(X|\overline{a'}) =
\int_{\overline{a'}} X \, dp_{|\overline{a'}} =
\sum_{a'' \in A''} X(a',a'') p_{|\overline{a'}} =
E_{p_{|\overline{a'}}}(X(a',-)).
\]

\end{proof}

\subsection{Martingale measures}\label{subsection:martingale measures}
Recall that a sequence of random variables $X^0,X^1,\dots$ adapted to
a filtration $\{\C F_k\}$ of the corresponding sample space
is called a {\em discrete martingale} if
$$
E(X^{k+1}|\C F_k) = X^k
$$ 
(see, for example, \cite[Definition 9.1]{MR3059814}). 

In our situation we fix a positive integer $n\in \B N$ and a finite set
$A$. We consider the sample space $A^n$ together with the filtration 
$\C F_0\subseteq \C F_1\subseteq \dots \subseteq \C F_n=2^{A^n}$,
where the $\sigma$-algebra $\C F_k$ is generated by a partition $\C P_k$
of $A^n$ defined by
$$
\C P_k(a_1,\ldots,a_k) =\{(a_1,\ldots,a_k)\}\times A^{n-k} = \overline{(a_1,\ldots,a_k)}.
$$
Notice that the sets of this partition are in bijection with elements of $A^k$.
We call the above filtration $\{\C F_k\}$ 
the {\em product filtration}. Observe that an $\C F_k$-measurable
random variable $X$ defined on $A^n$ is constant on the sets of $\C P_k$. Equivalently,
$X(a_1,\ldots,a_n)$ depends on the first $k$ coordinates only or
that $X$ factors as
$$
X\colon A^n\to A^k\to \RR,
$$
where $A^n\to A^k$ is the projection onto the first $k$ factors.
It is thus convenient to consider $\C F_k$-measurable random variables
as functions on $A^k$.

For any random variable $X$ on $A^n$, the conditional expectation
$E(X|\C F_k)$ is an $\C F_k$-measurable random variable and it satisfies
the following identity:
\begin{equation}
p(\overline{a})E(X|\C F_k)(\overline{a}) = \int_{\overline{a}}Xdp,
\label{Eq:cond-exp}
\end{equation}
where $a\in A^k$ and $\overline{a}=\{a\}\times A^{n-k}$ (see, for example,
\cite[Section 10.1]{MR2977961}).

Consider an $\RR^m$-valued {\em $n$-step random process} $\{X^k\}_{k=0,\dots,n}$ on $A^n$ adapted to the product filtration.
That is, a sequence of $n+1$ random vectors
$$
X^0,X^1,\ldots,X^n\colon A^n\to \RR^m,
$$
such that $X^k$ if $\C F_k$-measurable. 

Fix some $c=(c_1,\ldots,c_m) \in \RR^m$ with $c_i>0$ and define the $n$-step random process $\{Y^k\}_{k=0,\dots,n}$ as follows:
\begin{equation}\label{eq:Yk}
	Y^k = (c_1^{-k}X_1^k,\ldots,c_m^{-k}X_m^k),
\end{equation}
where, as above, $X^k_i$ denotes the $i^\text{th}$ component of $X^k$.

\begin{definition}\label{def:martingale}
A {\em martingale measure for the $n$-step random process $\{X^k\}_{k=0,\dots,n}$ and $c\in \RR^m$} 
is a probability measure $p \in \Delta(A^n)$ such that the process  $\{Y^k\}_{k=0,\dots,n}$
defined in \eqref{eq:Yk} is a discrete martingale with respect to $p$. That is,
\begin{equation}
E_p(Y^{k+1}\ |\ \C F_k) = Y^k.
\label{Eq:martingale-c}
\end{equation}
\end{definition}
Observe that the above defining equation can be equivalently rewritten as follows,
where $a'\in A^k$, $i=1,\ldots, m$ and $0\leq k< n$.
\begin{align}
E_p(c_i^{-(k+1)}X_i^{k+1}\ |\ \C F_k) &= c_i^{-k}\cdot X_i^k \nonumber\\
E_p(X_i^{k+1}\ |\ \C F_k)             &= c_i \cdot X_i^k     \nonumber\\ 
p(\overline{a'})E_p(X_i^{k+1}\ |\ \C F_k)(\overline{a'}) &= p(\overline{a'})\cdot c_i \cdot X_i^k(\overline{a'})
\nonumber \\
\int_{\overline{a'}} X_i^{k+1} \, dp &= p(\overline{a'}) \cdot c_i \cdot X_i^{k}(a') 
\label{eqn:martingale ell equiv}
\end{align}
Since $X^k$ is constant on $\overline{a'}$ with value $X^k(a')$, the last equation is
equivalent to
\begin{equation}\label{eqn:martingale ell}
\int_{\overline{a'}} X_i^{k+1} \, dp = c_i \int_{\overline{a'}} X_i^{k} \, dp, \qquad i=1,\dots,m.
\end{equation}
In the subsequent computations we will mostly use equation \eqref{eqn:martingale ell equiv}.

The basic example in this paper is a risk-neutral measure (as defined by \eqref{eqn:risk neutral def})
which is a non-degenerate martingale measure for the stock price process $\{S(k)\}_{k=0,\dots,n}$ and
$c=(R,R,\ldots,R)$. Here $S(k)$ is a vector of stock prices at time $k$: $S(k)=\left(S_1(k),\dots,S_m(k)\right)$.

We denote the set of all martingale measures for $\{X^k\}_{k=0,\dots,n}$ and $c\in \RR^m$ by
\[
M(X^k;c) = \{ p \in \Delta(A^n) \ : \ \text{$p$ is a martingale measure for $\{X^k\}_{k=0,\dots,n}$ and $c$}\}
\]
and by $\oM(X^k;c)$ the set of non-degenerate martingale measures.
Notice that \eqref{eqn:martingale ell} is a system of linear equations for the unknown $p$.
It follows that $M(X^k;c)$ is the intersection of the simplex $\Delta(A^n)$ with a number of
hyperplanes. We thus deduce the following observation.

\begin{proposition}
The set of martingale measures $M(X^k;c)$ is a bounded polytope in $\RR^A$, 
hence a compact convex set in $\RR^A$.
\qed
\end{proposition}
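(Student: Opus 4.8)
The plan is to read the conclusion off directly from the defining relations of $M(X^k;c)$ recorded in \eqref{eqn:martingale ell}. The one observation that makes everything work is that these relations are \emph{linear} in the unknown measure $p$. Indeed, for each fixed component index $i \in \{1,\dots,m\}$, each time step $0 \leq k < n$, and each $a' \in A^k$, the equation $\int_{\overline{a'}} X_i^{k+1}\, dp = c_i \int_{\overline{a'}} X_i^{k}\, dp$ may be rewritten as $\sum_{a \in \overline{a'}} \bigl( X_i^{k+1}(a) - c_i X_i^{k}(a)\bigr)\, p(a) = 0$, which is the vanishing of a single fixed linear functional evaluated at $p$; here the coefficients are determined by the (fixed) process values $X^k$ and the constant $c_i$, while $p$ ranges over $\Delta(A^n)$. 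Thus the martingale condition \eqref{Eq:martingale-c} is equivalent to $p$ lying in the intersection of a finite family of hyperplanes in the ambient space $\RR^{A^n}$ containing $\Delta(A^n)$.

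First I would recall from the opening of this section that $\Delta(A^n) \subseteq \RR^{A^n}$ is the standard simplex and is compact; concretely it is itself a bounded polytope, being cut out by the finitely many half-space constraints $p(a) \geq 0$ for $a \in A^n$ together with the single affine hyperplane $\sum_{a \in A^n} p(a) = 1$. By definition $M(X^k;c)$ is exactly the intersection of $\Delta(A^n)$ with the finitely many hyperplanes produced above. Writing each such hyperplane as the intersection of the two closed half-spaces it bounds, I conclude that $M(X^k;c)$ is an intersection of finitely many closed half-spaces, and it is bounded since it is contained in the bounded set $\Delta(A^n)$. This is precisely what it means for $M(X^k;c)$ to be a bounded polytope.

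The remaining clause is then formal: a bounded polytope is closed, being an intersection of closed half-spaces, and bounded, hence compact in $\RR^{A^n}$ by Heine--Borel; and it is convex as an intersection of the convex sets $\Delta(A^n)$ and the hyperplanes. I do not expect a genuine obstacle, as the whole argument is structural rather than computational. The only point that deserves a moment's care is the linearity claim of the first paragraph, namely that the martingale equations are linear in $p$ and not in the process values, which is immediate once one notes that $X^k$ and $c$ are held fixed while $p$ varies; and since the passage from \eqref{Eq:martingale-c} to the scalar form \eqref{eqn:martingale ell} has already been carried out in the text, it may simply be cited.
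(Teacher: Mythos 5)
Your argument is correct and is exactly the one the paper uses: the text immediately preceding the proposition observes that \eqref{eqn:martingale ell} is linear in $p$, so $M(X^k;c)$ is the intersection of the compact simplex $\Delta(A^n)$ with finitely many hyperplanes, and the proposition is then stated with no further proof. Your write-up merely spells out the same structural observation in more detail.
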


\subsection{Random processes with uniform quotients}\label{subsection:unform random process}
In this subsection we fix a function $\phi =(\phi_1,\ldots,\phi_m) \colon A \to \RR^m$ such that $\phi_i(a)>0$ for 
all $i=1,\ldots, m$.

\begin{definition}
An $n$-step  $\RR^m$-valued random process $\{X^k\}_{k=0,\dots,n}$ is said to have {\em uniform quotient} $\phi$ if  $X^0>0$ and
\[
X^k_i(a_1,\dots,a_n) \, = \, X^0_i \cdot \phi_i(a_1) \cdots \phi_i(a_k).
\]
\end{definition}
The basic example we are interested in is the stock price process $\{S(k)\}_{k=0,\dots,n}$ following the
binomial model. In this case $\phi_i(a_j) \in \{D_i,U_i\}$ (see Section \ref{section:formal model}).
In what follows we give a general argument which, specified to the financial models
in the subsequent sections, says that martingale measures can be defined 
using either the stock price process or the stock price ratios.

\begin{definition}\label{def:M_n(phi;c)}
We introduce here the notation of $M_n(\phi,c)$, $M_n(A,c)$, $M_n^{\circ}(\phi,c)$,
$M_n^{\circ}(A,c)$ for various subsets of the set of martingale measures.

Consider a positive valued  $\phi \colon A \to \RR^m$ and  $n \geq 0$.
Given $1 \leq k \leq n$ set $\Phi^k \colon A^n \to \RR^m$ to be the composition
\[
\Phi^k \colon A^n \xto{\ \pi_k \ } A \xto{\ \phi \ } \RR^m,
\]
where $\pi_k\colon A^n\to A$ is the projection onto the $k$-th factor.
Given $c \in \RR^m$ let
\[
M_n(\phi;c)
\]
be the subset of $\Delta(A^n)$ containing measures
$p$ such that for any $0 \leq k \leq n-1$ and any $a' \in A^k$,
\[
\int_{\overline{a'}} \Phi_i^{k+1} \, dp = c_i p({\overline{a'}}).
\]
Written explicitly, this is the condition
\begin{equation}\label{eqn:uniform distillation}
\sum_{a \in A} \phi_i(a) p(\overline{a'a}) = c_i p(\overline{a'}).
\end{equation}
If $A \subseteq \RR^m$ and $\phi \colon A \to \RR^m$ is the inclusion then instead of  $M_n(\phi;c)$ we write
\[
M_n(A;c).
\]
The subsets of $M_n(\phi,c)$ and of $M_n(A,c)$ consisting of
non-degenerate measures are denoted $\oM_n(\phi;c)$ and $\oM_n(A;c)$, respectively.
\end{definition}

If $p$ is non-degenerate then \eqref{eqn:uniform distillation} 
is equivalent to the condition that for any $a' \in A^k$
\[
E_p(\Phi^k|\overline{a'}) = c.
\]
We also notice that if $n=1$ then $M_1(\phi;c)$ consists of such $p \in \Delta(A)$ that
\begin{equation}\label{eqn:M_q(A;c)}
E_p(\phi)=c.
\end{equation}

\begin{proposition}\label{prop:M_n of uniform}
Suppose that $\{X^k\}_{k=0,\dots,n}$ is a random process on $A$ with  uniform quotient $\phi>0$ and that $X^0>0$.
Then
\[
M(X^k;c) = M_n(\phi;c).
\]
\end{proposition}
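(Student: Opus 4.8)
The plan is to reduce everything to a pointwise algebraic identity on the cylinder sets $\overline{a'}$ and then observe that the martingale defining equation \eqref{eqn:martingale ell equiv} and the defining equation \eqref{eqn:uniform distillation} of $M_n(\phi;c)$ are related by multiplication/division by a strictly positive factor. First I would record the two consequences of the uniform quotient hypothesis. Since $X^k$ is $\C F_k$-measurable, each $X^k_i$ is constant on $\overline{a'}=\{a'\}\times A^{n-k}$ with value $X^k_i(a')=X^0_i\,\phi_i(a_1)\cdots\phi_i(a_k)$, and this value is strictly positive because $X^0>0$ and $\phi>0$. Moreover, for any point $(a',b)\in\overline{a'}$ with $b=(a_{k+1},\dots,a_n)$,
\[
X^{k+1}_i(a',b)=X^0_i\,\phi_i(a_1)\cdots\phi_i(a_{k+1})=X^k_i(a')\cdot\phi_i(a_{k+1})=X^k_i(a')\cdot\Phi^{k+1}_i(a',b),
\]
so that $X^{k+1}_i=X^k_i(a')\cdot\Phi^{k+1}_i$ holds identically on $\overline{a'}$.

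Next I would integrate this identity over $\overline{a'}$ and pull out the constant $X^k_i(a')$ to get
\[
\int_{\overline{a'}}X^{k+1}_i\,dp \;=\; X^k_i(a')\int_{\overline{a'}}\Phi^{k+1}_i\,dp.
\]
Comparing with the martingale equation \eqref{eqn:martingale ell equiv}, which reads $\int_{\overline{a'}}X^{k+1}_i\,dp=p(\overline{a'})\,c_i\,X^k_i(a')$, substitution turns it into $X^k_i(a')\int_{\overline{a'}}\Phi^{k+1}_i\,dp=X^k_i(a')\,c_i\,p(\overline{a'})$. Dividing through by the positive number $X^k_i(a')$ yields exactly $\int_{\overline{a'}}\Phi^{k+1}_i\,dp=c_i\,p(\overline{a'})$, i.e. \eqref{eqn:uniform distillation}; conversely, multiplying \eqref{eqn:uniform distillation} by $X^k_i(a')$ recovers \eqref{eqn:martingale ell equiv}. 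Letting $i$, $a'\in A^k$ and $0\le k<n$ range over all admissible values shows that $p$ satisfies every martingale equation if and only if it satisfies every defining equation of $M_n(\phi;c)$, which is the asserted equality $M(X^k;c)=M_n(\phi;c)$.

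The computation is entirely routine, so there is no real obstacle; the only point requiring attention—and precisely the reason the hypotheses $X^0>0$ and $\phi>0$ appear in the statement—is the strict positivity of $X^k_i(a')$. This positivity is what licenses dividing by $X^k_i(a')$ in the last step, and hence what upgrades a one-sided implication into the full equivalence of the two systems of linear equations defining the two sets of measures.
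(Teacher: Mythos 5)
Your proof is correct and is essentially the same argument as the paper's: both divide the martingale equation \eqref{eqn:martingale ell equiv} by the strictly positive constant $X_i^k(a')=X_i^0\,\phi_i(a_1)\cdots\phi_i(a_k)$ to obtain the defining equation \eqref{eqn:uniform distillation} of $M_n(\phi;c)$, and observe the step is reversible. You merely spell out the intermediate identity $X_i^{k+1}=X_i^k(a')\cdot\Phi_i^{k+1}$ on $\overline{a'}$, which the paper leaves implicit.
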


\begin{proof}

An element of $M(X^k;c)$ is characterized by equation \eqref{eqn:martingale ell equiv}.
Dividing both sides of this equation by the positive number 
$X_i^k(a')=X^0_i \cdot \phi_i(a'_1) \cdots \phi_i(a'_k)$ 
yields $\int_{\overline{a'}}\Psi_i^{k_1}dp = c_ip(\overline{a'})$, which characterizes
an element of $M_n(\phi;c)$.
\end{proof}

\begin{lemma}\label{lem:tensor M psi c}
Suppose that $p' \in M_{n'}(\phi;c)$ and $p'' \in M_{n''}(\phi;c)$. 
Then $p' \otimes p'' \in M_{n'+n''}(\phi;c)$.
\end{lemma}

\begin{proof}
Set $p=p' \otimes p''$.
Consider $0 \leq k < n$ and $b \in A^k$.
Suppose $a \in A$.
If $k<n'$ then $p(\overline{ba})=p'(\overline{ba})$ and $p(\overline{b})=p'(\overline{b})$.
So \eqref{eqn:uniform distillation} holds for $p$ since it holds for $p' \in M_{n'}(\phi;c)$.
If $n' \leq k <n$ then we can write $b=a'a''$ for some $a' \in A^{n'}$ and $a'' \in A^{k-n'}$.
Then $p(\overline{ba})=p'(a')p''(\overline{a''a})$ and $p(\overline{b}) = p'(a') p''(\overline{a''})$, and \eqref{eqn:uniform distillation} holds for $p$ since it holds for $p'' \in M_{n''}(\phi;c)$.
\end{proof}

We remark that for any $a' \in A^k$ where $0 \leq k \leq n$ there is a natural bijection between
$\overline{a'} =\{a'\} \times A^{n-k}$ and  $A^{n-k}$.

\begin{lemma}\label{lem:conditional oM}
Suppose that $p \in \oM_n(\phi;c)$ and $b \in A^k$ for some $0 \leq k  \leq n$.
Then $p_{|\overline{b}} \in \oM_{n-k}(\phi;c)$.
\end{lemma}

\begin{proof}
Clearly $p_{|\overline{b}}>0$.
Suppose that $a' \in A^j$ for some $0 \leq j < n-k$ and that $a \in A$.
Then $p_{|\overline{b}}(\overline{a'a}) = \tfrac{1}{p(\overline{b})}p(\overline{ba'a})$ and
$p_{|\overline{b}}(\overline{a'}) = \tfrac{1}{p(\overline{b})} p(\overline{ba'})$.
Apply \eqref{eqn:uniform distillation} to $p \in \oM(\phi;c)$ and $ba' \in A^{k+j}$, and divide both sides by $p(\overline{b})$.
\end{proof}

\begin{corollary}\label{cor:oM(L,b) conditional surjective}
Consider some $0<k<n$ and suppose that $\oM_k(\phi;c)$ 
is not empty.
Let $a' \in A^k$.
Then
\[
\oM_n(\phi;c) \xto{\ p \mapsto p_{|\overline{a'}} \ } \oM_{n-k}(\phi;c)
\]
is surjective.
\end{corollary}

\begin{proof}
The assignment $p \mapsto p_{|\overline{a'}}$ is into $\oM_{n-k}(\phi;c)$ by Lemma \ref{lem:conditional oM}.
Consider some $p'' \in \oM_{n-k}(\phi;c)$.
Choose some $p' \in \oM_k(\phi;c)$ and set $p=p' \otimes p''$.
It is clear that $p>0$ and $p \in M_n(\phi;c)$ by Lemma \ref{lem:tensor M psi c}.
So $p \in \oM_n(\phi;c)$.
By Lemma \ref{lem:products conditional general}\ref{lem:products conditional general:tensor-conditional} $p_{|\overline{a'}}=p''$.
\end{proof}

\section{A formal model for discrete time multi-asset binomial market}\label{section:formal model}

In the paper \cite{MR4553406}, we introduced and developed in detail an
information structure of our market model. We also described in detail the set
of martingale and risk-neutral measures. Here we present a more formal and
abbreviated version of that description which is sufficient for the purposes of
this paper. We refer the reader to \cite{MR4553406} for more details.

\subsection{A single time step model}\label{subsection:single-step}
Recall that in our market model each risky asset $S_i$ follows the binomial model with parameters $(D_i,U_i)$, $i=1,\dots ,m$.
A natural sample space for a single time step market model (i.e $n=1$) is the set 
\[
\L=\{0,1\}^m
\]
of all the sequences of length $m$ consisting of $0$'s and $1$'s. 
We denote its elements by $\lambda=(\lambda(1),\dots,\lambda(m))$ and view $\L$ as a subset of $\RR^m$. 
Each $m$-tuple $\lambda$ represents ``the state of the world'' at time $n=1$. 
Namely, if $\lambda(i)=0$, then the  $i^\text{th}$ stock price dropped down at time $n=1$ with $S_i(1)=S_i(0)D_i$; If $\lambda(i)=1$, then the $i^\text{th}$ stock price went up at time $n=1$ with  $S_i(1)=S_i(0)U_i$. 
Thus, $S_i(0)$ and $S_i(1)$ are  random variables over $\L$ with $S_i(0)$ constant (initial prices of stocks) and $S_i(1)=S_i(0) \psi_i$ where $\psi_i$ are the random variables of the stock price ratios ($i=1,\dots,m)$) at time $n=1$:
\begin{equation}\label{E:def psii def}
\psi_i(\lambda) = \left\{
\begin{array}{ll}
D_i           & \text{if $\lambda(i)=0$} \\
U_i           & \text{if $\lambda(i)=1$}
\end{array}\right.
\end{equation}
Let the random variables 
\[
\ell_i \colon \L \to \{0,1\}, \qquad i=1,\ldots,m
\]
represent a single-step dynamics of the the $i^\text{th}$ stock, i.e
\begin{equation}\label{E:def elli}
\ell_i( \lambda) = \lambda(i).
\end{equation}
Thus, $\ell_i$ is the restriction to $\L$ of the linear projection  $\pi_i \colon \RR^m \to \RR$ onto the $i^\text{th}$ factor. 
With the above notation, we have for each $\lambda \in \L$
\[
\psi_i(\lambda)=D_i+(U_i-D_i)\lambda(i)
\]
or, equivalently, for every $i=1,\ldots,m$
\begin{equation}\label{E:def psii}
\psi_i = D_i+(U_i-D_i)\ell_i,
\end{equation}
Notice that $\psi_i$ is the restriction to $\L$ of the affine function $\tilde{\psi}_i \colon \RR^m \to \RR$ 
\[
\tilde{\psi}_i(x_1,\dots,x_m) = D_i+(U_i-D_i)x_i.
\]
Since the price ratio of $S_0$ is always $R$ it is natural to set
\begin{equation}\label{E:def psi0}
	\psi_0=R,
\end{equation}
a constant random variable.

\subsection{A multi-step model}\label{subsection:multi-step}

The $n$-step binomial market model with $m>1$ risky assets may be considered as $n$ iterations (not necessarily independent) of the single-step model described in Subsection \ref{subsection:single-step}.
Thus, the natural sample space for an $n$-step model is $\L^n$, the set of all $n$-tuples $(\lambda^1,\dots,\lambda^n)$, where $\lambda^i\in \L$, $i=1,\dots,n$. 

The price of the $i^\text{th}$ asset ($0 \leq i \leq m$) at time $0 \leq k \leq n$ is a random variable $S_i(k) \colon \L^n \to~\RR$
\begin{equation}\label{E:def prices Si}
S_i(k) = S_i(0) \cdot \Psi_i(1) \cdots \Psi_i(k), \qquad i=0,\dots,m
\end{equation}
where $S_i(0)>0$ are constant (the initial price of the $i$-th asset at time $0$) and $\Psi_i(k)$ are the random variables of the price ratio of the $i^\text{th}$ asset ($0 \leq i \leq m$) at time $1 \leq k \leq n$:
\begin{equation}\label{eqn:def PSI_i(k)}
\Psi_i(k) \colon (\lambda^1,\dots,\lambda^n) \mapsto \psi_i(\lambda^k), 
\end{equation}
where $\psi_i$ are defined in \eqref{E:def psii def} (see also \eqref{E:def psii}) and \eqref{E:def psi0}.
Recall that each $S(k)$ is an $\RR^{m+1}$-valued random vector $(S_0(k),S_1(k),\dots,S_m(k))$.
With the terminology in Section \ref{subsection:unform random process} the following observation is straightforward.

\begin{proposition}\label{prop:S(k) uniform quotients}
The $\RR^{m+1}$-valued random vectors $S(0),\dots,S(n)$ form an $n$-step random process on $\L$ with uniform quotient $\phi=(\psi_0,\dots,\psi_m)$.
\end{proposition}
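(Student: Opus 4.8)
The plan is to unwind the two relevant definitions from Section \ref{subsection:unform random process} and verify them against the explicit price formula \eqref{E:def prices Si}; this is what makes the statement ``straightforward''. There are exactly two things to check: that $\{S(k)\}_{k=0,\dots,n}$ is an $n$-step random process on $\L$ (i.e.\ each $S(k)$ is $\C F_k$-measurable, equivalently factors through the projection $\L^n \to \L^k$), and that it satisfies the uniform quotient identity with $\phi = (\psi_0,\dots,\psi_m)$.

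First I would substitute the definition \eqref{eqn:def PSI_i(k)} of the price ratios $\Psi_i(k)$ into the recursion \eqref{E:def prices Si} to obtain, for each $i = 0,\dots,m$ and each $(\lambda^1,\dots,\lambda^n) \in \L^n$,
\[
S_i(k)(\lambda^1,\dots,\lambda^n) = S_i(0) \cdot \psi_i(\lambda^1) \cdots \psi_i(\lambda^k).
\]
The right-hand side manifestly depends only on the first $k$ coordinates $\lambda^1,\dots,\lambda^k$, so $S(k)$ factors through $\L^n \to \L^k$ and is therefore $\C F_k$-measurable; this establishes that $\{S(k)\}_{k=0,\dots,n}$ is a bona fide $n$-step random process. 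The very same display is the uniform quotient identity $X^k_i = X^0_i \cdot \phi_i(\lambda^1)\cdots\phi_i(\lambda^k)$ specialized to $X^k = S(k)$ and $\phi_i = \psi_i$.

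It remains to check the positivity hypotheses built into the notion of uniform quotient, namely $X^0 > 0$ and $\phi > 0$. The initial prices $S_i(0) > 0$ are positive by assumption, giving $S(0) > 0$. For the quotient, $\phi_0 = \psi_0 = R > 0$ by \eqref{E:def psi0}, while for $1 \leq i \leq m$ the value $\psi_i(\lambda) \in \{D_i, U_i\}$ is positive because $0 < D_i < R < U_i$; hence $\phi_i(\lambda) > 0$ for every $i$ and every $\lambda \in \L$. This verifies all the conditions. I do not expect any genuine obstacle here: the entire content is the observation that the recursive price definition \eqref{E:def prices Si}, once \eqref{eqn:def PSI_i(k)} is inserted, is already written in the uniform quotient form, and the positivity is immediate from the standing assumption $0 < D_i < R < U_i$.
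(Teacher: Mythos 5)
Your proposal is correct and is precisely the verification the paper has in mind: the paper offers no written proof, declaring the proposition ``straightforward,'' and the intended content is exactly what you carry out --- substituting \eqref{eqn:def PSI_i(k)} into \eqref{E:def prices Si} to see the uniform quotient form, noting the dependence on only the first $k$ coordinates, and checking positivity of $S_i(0)$ and of $\psi_i$ from $0<D_i<R<U_i$ and $\psi_0=R$. Nothing is missing.
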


The ``state of the world'' at time $0 \leq k \leq n$ is described by a $k$-tuple $\omega=(\lambda^1,\dots,\lambda^k) \in \L^k$.
More precisely, with the notation \eqref{eqn:bar notation for events} in Section \ref{subsection:martingale measures}, it is the event 
\[
\overline{\omega} = \{\omega\}\times \L^{n-k} 
\]
in $\L^n$.
The ambient space $\L^n$ should be understood from the context. 

Notice that the value of $S_i(k)$ at $\omega \in \L^n$ depends only on the first $k$ entries of $\omega$. 
Hence, the value of $S_i(k)$ only depends on the state of the world at time $k$.
We will therefore abuse notation and regard $S_i(k)$ as a real-valued function on $\L^k$.

\section{Supermodular random variables and a supermodular vertex}\label{section:supermodular}
In this section we single out a certain collection of random variables, first ones that are defined on $\L$ -- the sample space for a single-step market model defined in Subsection \ref{subsection:single-step}, and then ones that are defined on $\L^n$ -- the sample space for the $n$-step market model, see Subsection \ref{subsection:multi-step}.

Recall that the {\em support} of a vector $x \in \RR^k$ is the set $\{1 \leq i \leq k : x_i \neq 0\}$. Denote by $\wp(\{1,\dots,m\})$ the power set (the set of all subsets) of $\{1,\dots,m\}$. Recall that $\L=\{0,1\}^n$. It is straightforward to see that 
\[
\supp \colon \L \xto{ \ \lambda \mapsto \supp(\lambda) \ } \wp(\{1,\dots,m\})
\]
is a bijection, and in this way it equips $\L$ with the partial order
\begin{equation}\label{E:def preceq on L}
\lambda \preceq \lambda' \iff \OP{supp}(\lambda) \subseteq \OP{supp}(\lambda').
\end{equation}
Union and intersection of sets render $\L$ a lattice,  see for example \cite{MR1902334} for background, with {\em join} $\vee$ and {\em meet} $\wedge$ defined by
\begin{eqnarray}
\label{E:def vee wedge}
\lambda \vee \lambda' &=& ( \max\{\lambda(1),\lambda'(1)\}, \dots, \max\{\lambda(m),\lambda'(m)\}) \\
\nonumber
\lambda \wedge \lambda' &=& ( \min\{\lambda(1),\lambda'(1)\}, \dots, \min\{\lambda(m),\lambda'(m)\}).
\end{eqnarray}
Equivalently, this can be defined by
\begin{align*}
\OP{supp}(\lambda \vee \lambda')&=\OP{supp}(\lambda) \cup \OP{supp}(\lambda')
\\
\OP{supp}(\lambda \wedge \lambda')&=\OP{supp}(\lambda) \cap \OP{supp}(\lambda').
\end{align*}


\begin{example}
Let $m=3$ and let $\lambda=(0,1,0)$ and $\lambda'=(1,1,0)$ be two
elements of $\L$. We have the following correspondence:
\begin{eqnarray}
\lambda&=&(0,1,0)\longleftrightarrow \{ 2 \}\equiv \OP{supp}(\lambda) \nonumber\\ 
\lambda'&=&(1,1,0)\longleftrightarrow \{1,2 \}\equiv \OP{supp}(\lambda').\nonumber 
\end{eqnarray}  
Thus, $\lambda \preceq \lambda'$ and $\lambda \vee \lambda' =  \lambda'$ and $\lambda \wedge \lambda'= \lambda$.
\end{example}

\subsection{Supermodular functions on $\L$}

\begin{definition}\label{supermodular_function}
A function $f \colon \L \to \RR$ is called {\em supermodular} if for any $\lambda,\lambda' \in \L$
\begin{equation}\label{D:def_supermodular_fun}
f(\lambda \vee \lambda') + f(\lambda \wedge \lambda') \geq f(\lambda)+f(\lambda').
\end{equation}
It is called {\em modular} if equality holds.
\end{definition}


\begin{proposition}\label{P:supermodular facts}
\hfill
\begin{enumerate}[label=(\roman*)]
\item \label{P:supermodular:linearity}
A linear combination with non-negative coefficients of supermodular functions is supermodular.

\item\label{P:supermodular:affine}
Let $g \colon \RR^m \to \RR$ be an affine function.
Then $g|_{\L}$ is modular.

\item\label{P:supermodular:convex-modular}
Let $g \colon \RR^m \to \RR$ be an affine function of the form $g(x_1,\dots,x_m) = \sum_{i=1}^m a_i x_i + b$, where $a_1,\dots,a_m \geq 0$.
If $h \colon \RR \to \RR$ is convex then the restriction of $h \circ g$ to $\L$ is supermodular.
\end{enumerate}
\end{proposition}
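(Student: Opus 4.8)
The plan is to handle the three parts in increasing order of difficulty, with part \ref{P:supermodular:convex-modular} the only one requiring genuine work. Part \ref{P:supermodular:linearity} is immediate: the defining inequality \eqref{D:def_supermodular_fun} is preserved under non-negative linear combinations, since summing the supermodularity inequalities for the $f_k$ weighted by coefficients $c_k\geq0$ produces the corresponding inequality for $f=\sum_k c_k f_k$. For part \ref{P:supermodular:affine}, the key is the elementary scalar identity $\max\{a,b\}+\min\{a,b\}=a+b$. Applied coordinatewise to the definitions \eqref{E:def vee wedge}, this gives $(\lambda\vee\lambda')(i)+(\lambda\wedge\lambda')(i)=\lambda(i)+\lambda'(i)$ for every $i$, and feeding this into an affine $g(x)=\sum_i a_i x_i+b$ yields the equality $g(\lambda\vee\lambda')+g(\lambda\wedge\lambda')=g(\lambda)+g(\lambda')$ in \eqref{D:def_supermodular_fun}; hence $g|_\L$ is modular.

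Part \ref{P:supermodular:convex-modular} is the heart of the matter. Fixing $\lambda,\lambda'\in\L$, I would abbreviate $s=g(\lambda\vee\lambda')$, $t=g(\lambda\wedge\lambda')$, $u=g(\lambda)$, and $v=g(\lambda')$; the goal is $h(s)+h(t)\geq h(u)+h(v)$. Two facts drive the argument. First, part \ref{P:supermodular:affine} applied to the affine map $g$ gives the \emph{modular identity} $s+t=u+v$. Second, since the coefficients satisfy $a_1,\dots,a_m\geq0$ and the chain $\lambda\wedge\lambda'\preceq\lambda,\lambda'\preceq\lambda\vee\lambda'$ holds coordinatewise by \eqref{E:def preceq on L} and \eqref{E:def vee wedge}, the affine map $g$ is monotone along it, so that $t\leq u,v\leq s$. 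Thus $u$ and $v$ are two points of the interval $[t,s]$ with $u+v=s+t$.

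From here the plan is to write $u=\theta s+(1-\theta)t$ for a suitable $\theta\in[0,1]$ (when $s>t$; the degenerate case $s=t$ forces $u=v=s=t$ and the claim is trivial). The modular identity $s+t=u+v$ then forces $v=(1-\theta)s+\theta t$, the complementary convex combination. Applying convexity of $h$ to each representation and adding gives
\begin{align*}
h(u)+h(v) &\leq \bigl(\theta h(s)+(1-\theta)h(t)\bigr)+\bigl((1-\theta)h(s)+\theta h(t)\bigr)\\
&= h(s)+h(t),
\end{align*}
which is exactly supermodularity of $h\circ g$. The step I expect to be the main obstacle is the passage from ``$u,v\in[t,s]$ and $u+v=s+t$'' to the complementary convex representation: it is precisely here that the sign hypothesis $a_i\geq0$ (via monotonicity of $g$) and the modular identity of part \ref{P:supermodular:affine} combine, and once both are secured the convexity estimate is mechanical.
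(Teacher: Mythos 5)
Your proposal is correct. All three arguments go through: part (i) is immediate as you say; part (ii) follows from the coordinatewise identity $\max\{a,b\}+\min\{a,b\}=a+b$; and for part (iii) the combination of the modular identity $s+t=u+v$, the monotonicity $t\leq u,v\leq s$ coming from $a_i\geq 0$ and $\lambda\wedge\lambda'\preceq\lambda,\lambda'\preceq\lambda\vee\lambda'$, and the complementary convex representations $u=\theta s+(1-\theta)t$, $v=(1-\theta)s+\theta t$ is exactly what is needed, with the degenerate case $s=t$ handled correctly. The one point of comparison worth noting is that the paper does not prove the proposition at all: it cites \cite{MR3154633} for each of the three parts (Proposition 2.2.5(a), Theorem 2.2.3, and Theorem 2.2.6(a), respectively). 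Your write-up is therefore a self-contained replacement for those citations; the argument you give for part (iii) is the standard one found in that reference, so nothing is lost in generality, and the reader gains a proof that uses only the lattice structure of $\L$ set up in \eqref{E:def preceq on L}--\eqref{E:def vee wedge} together with one application of convexity.
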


\begin{proof}
Part \ref{P:supermodular:linearity} is \cite[Proposition 2.2.5(a)]{MR3154633}.
Part \ref{P:supermodular:affine} follows from \cite[Theorem 2.2.3]{MR3154633} (and is straightforward).
Part \ref{P:supermodular:convex-modular} follows from \cite[Theorem 2.2.6(a)]{MR3154633}.
\end{proof}

\begin{remark}
In what follows, we will use the term {\em supermodular random variable} as a synonym to {\em supermodular function}.
\end{remark}

Recall the random variables $\psi_i$ \eqref{E:def psii def},\eqref{E:def psi0}; See also \eqref{E:def psii}.

\begin{corollary}\label{C:corollary_supermodular}
Consider a random variable $u \colon \L \to \RR$ of the form
\[
u(\lambda)= h \left( \sum_{i=0}^m \beta_i \cdot \psi_i(\lambda) \right),
\]	
where $h\colon \RR\to \RR$ is a convex function 
and $\beta_1,\dots,\beta_m \geq 0$.
Then $u$ is a supermodular random variable.
\end{corollary}

\begin{proof}
By \eqref{E:def psii} -- \eqref{E:def psi0} 
\[
\sum_{i=0}^m \beta_i \psi_i
=
\beta_0 R + \sum_{i=1}^m \beta_i (D_i + (U_i-D_i)\ell_i
=
\beta_0 R + \sum_{i=1}^m \beta_i D_i + \sum_{i=1}^m \beta_i (U_i-D_i)\ell_i.
\]
Since $\ell_1,\dots,\ell_m \colon \RR^m \to \RR$ introduced in \eqref{E:def elli} are the projections the result follows by applying Proposition \ref{P:supermodular facts}\ref{P:supermodular:convex-modular} with $a_i = \beta_i (U_i-D_i)$ and $b= \beta_0 R + \sum_{i=1}^m \beta_i D_i$.
\end{proof}

\subsection{Fibrewise supermodular functions on $\L^n$}

In this section we fix some $n \geq 0$ and remark that we do not insist that it is the same $n$ as in the Introduction (Section \ref{S:intro}).

\begin{definition}\label{def:fibrewise supermodular}
A function $F \colon \L^n \to \RR$ is called {\em fibrewise supermodular} if for every $1 \leq k \leq n$, every $\omega' \in \L^{k-1}$, and every $\omega'' \in \L^{n-k}$ the function $f \colon \L \to \RR$ defined by
\[
f(\lambda) = F(\omega',\lambda,\omega'')\qquad (\lambda \in \L)
\]
is supermodular.
\end{definition}

Let $F$ be a contingent claim of class $\mathfrak{S}$, see Definition \ref{D:class}.
By definition $F$ is a function of the random variables $S_0(n),S_1(n),\dots,S_m(n)$, the asset prices at time $n$.
As we discussed in Subsection \ref{subsection:multi-step}, these are random variables over the sample space $\L^n$.

\begin{proposition}\label{prop:class S is fibrewise supermodular}
Let $F$ be a contingent claim of class $\mathfrak{S}$.
Then $F$ is a fibrewise supermodular random variable on $\L^n$.
\end{proposition}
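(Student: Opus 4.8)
The plan is to reduce the claim about the $n$-variable function $F$ on $\L^n$ to the single-variable statement already established in Corollary \ref{C:corollary_supermodular}. Recall that fibrewise supermodularity (Definition \ref{def:fibrewise supermodular}) is a condition checked one coordinate at a time: for fixed $1 \leq k \leq n$, fixed $\omega' \in \L^{k-1}$ and fixed $\omega'' \in \L^{n-k}$, I must show that the function $f(\lambda) = F(\omega',\lambda,\omega'')$ on $\L$ is supermodular. So the whole proof amounts to writing out this restricted function explicitly and recognizing it as a random variable of the form covered by Corollary \ref{C:corollary_supermodular}.

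First I would write $F = h\left(\sum_{i=0}^m \gamma_i S_i(n)\right)$ as in \eqref{Eq:payoff}, and substitute the product formula \eqref{E:def prices Si}, namely $S_i(n) = S_i(0)\cdot \Psi_i(1)\cdots\Psi_i(n)$, where by \eqref{eqn:def PSI_i(k)} the factor $\Psi_i(k)$ evaluated at $(\lambda^1,\dots,\lambda^n)$ is just $\psi_i(\lambda^k)$. When I fix all coordinates except the $k$-th, every factor $\Psi_i(j)$ with $j \neq k$ becomes a fixed positive constant determined by $\omega'$ and $\omega''$; only the factor $\Psi_i(k) = \psi_i(\lambda)$ still varies with $\lambda \in \L$. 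Collecting the constants, I get
\[
S_i(n)(\omega',\lambda,\omega'') = c_i \cdot \psi_i(\lambda),
\]
where $c_i = S_i(0)\cdot \prod_{j \neq k}\psi_i(\lambda^j) > 0$ (positivity because each $D_i, U_i, R$ is positive). Hence the restricted function takes the form
\[
f(\lambda) = h\left( \sum_{i=0}^m \gamma_i c_i\, \psi_i(\lambda) \right).
\]

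Now I would set $\beta_i = \gamma_i c_i$ and observe that this is precisely the shape required by Corollary \ref{C:corollary_supermodular}: $h$ is convex by the definition of class $\mathfrak{S}$, and the coefficients satisfy $\beta_i = \gamma_i c_i \geq 0$ for $1 \leq i \leq m$ since $\gamma_i \geq 0$ (Definition \ref{D:class}) and $c_i > 0$; there is no sign constraint on $\beta_0$, matching the lack of restriction on $\gamma_0$. Therefore Corollary \ref{C:corollary_supermodular} applies directly and yields that $f$ is supermodular. Since $k$, $\omega'$ and $\omega''$ were arbitrary, $F$ is fibrewise supermodular, completing the proof.

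I expect no serious obstacle here: the only point requiring care is the bookkeeping that shows the factors indexed by $j \neq k$ really do absorb into positive constants $c_i$ that do not depend on $\lambda$, and that multiplying the nonnegative $\gamma_i$ by the positive $c_i$ preserves nonnegativity. This is exactly the ``uniform quotient'' structure recorded in Proposition \ref{prop:S(k) uniform quotients}, so the substitution is clean and the main work has already been done in Corollary \ref{C:corollary_supermodular}.
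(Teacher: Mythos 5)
Your proof is correct and follows essentially the same route as the paper's: fix all coordinates but the $k$-th, absorb the frozen price-ratio factors into positive constants (your $c_i$, the paper's $\alpha_i$), and apply Corollary \ref{C:corollary_supermodular} with $\beta_i=\gamma_i c_i\geq 0$. No substantive difference.
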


\begin{proof}
Let $F$ be as in \eqref{Eq:payoff}.
Fix some $1 \leq k \leq n$, some $\omega'=(\omega^1,\dots,\omega^{k-1}) \in \L^{k-1}$ and $\omega''=(\omega^{k+1},\dots,\omega^n) \in \L^{n-k}$.
For any $1 \leq i \leq m$ set
\[
\alpha_i = S_i(0) \cdot \prod_{j \neq k} \psi_i(\omega^j).
\]
Notice that $\alpha_i \geq 0$ for all $1 \leq i \leq m$ by the assumption that $S_i(0)>0$.
It follows from \eqref{E:def prices Si} that the function $f \colon \L \to \RR$ in Definition \ref{def:fibrewise supermodular} has the form
\[
f(\lambda) = h\left(\sum_{i=0}^m \gamma_i S_i(n)(\omega',\lambda,\omega'')\right)
= h\left( \sum_{i=0}^m  \gamma_i \alpha_i \psi_i(\lambda) \right).
\]
Then $\beta_i = \gamma_i \alpha_i \geq 0$ for all $1 \leq i \leq m$ and Corollary \ref{C:corollary_supermodular} implies that $f$ is supermodular.
\end{proof}

\subsection{The supermodular vertex}\label{sec:supermodular vertex}

We continue studying fibrewise supermodular functions $F \colon \L^n \to \RR$ as in the previous subsection.
Once again we emphasize that in this subsection $n$ need not be the same as the one we fixed in the Introduction (Section \ref{S:intro}).

As in \eqref{E:bi non increasing} let us fix some $b \in \RR^m$ such that $0<b_m \leq \dots \leq b_1 <1$.
We then define $q_0,\dots,q_m$ as in \eqref{E:def qj}.
Observe that the vectors $\rho_0,\dots,\rho_m$ defined in \eqref{E:def rhoj} are all elements of $\L$.
\begin{definition}\label{def:supermodular vertex}
A probability measure $q \in \Delta(\L)$ defined by 
\begin{equation}\label{E:def supervertex}
q(\lambda) =
\left\{
\begin{array}{ll}
q_j & \text{if $\lambda=\rho_j$ for some $0 \leq j \leq m$} \\
0  & \text{otherwise}
\end{array}
\right. ,
\end{equation}
is called a {\em supermodular vertex.}
\end{definition}

It is clear from \eqref{E:def qj} that $q \geq 0$ and that $\sum_{\lambda \in \L} q(\lambda) = \sum_{j=0}^m q_j =1$ so indeed $q \in \Delta(\L)$.
Also, with $\ell_i \colon \L \to \RR$ in \eqref{E:def elli}, the definition of $\rho_j$ in \eqref{E:def rhoj} and of $q_j$ in \eqref{E:def qj} imply
\begin{equation}\label{eqn:E_q(ell)=b}
E_q(\ell_i) = \sum_{j=0}^m \ell_i(\rho_j) q(\rho_j) = \sum_{j=i}^m q_j = b_i.
\end{equation}
Notice that $\ell = (\ell_1,\dots,\ell_m)$ is the inclusion $\ell \colon \L \to \RR^m$ and we have just shown that $E_q(\ell)=b$.
Thus, with the notation of Definition \ref{def:M_n(phi;c)}, and see also \eqref{eqn:M_q(A;c)}, 
\begin{equation}\label{eqn:q in M_q(L;b)}
q \in M_1(\L;b).
\end{equation}
Lemma \ref{lem:tensor M psi c} now shows that for any $n \geq 1$
\[
q^{\otimes n} := \underbrace{q \otimes \dots \otimes q}_{\text{$n$ times}} \in M_n(\L;b).
\]
The fundamental result of \cite{MR4553406} is the next theorem.

\begin{theorem}[{\cite[Theorem A.12]{MR4553406}}]\label{theorem:supermodular vertex maximium}
Consider some $n \geq 1$ and let $f \colon \L^n \to \RR$ be a fibrewise supermodular function.

Suppose $b \in \RR^m$ satisfies $0<b_m \leq \cdots \leq b_1<1$.
Then
\[
\max_{p \in M_n(\L;b)} E_p(f) = E_{q^{\otimes n}}(f),
\]
where $q$ is the supermodular vertex defined in \eqref{E:def supervertex}.
\end{theorem}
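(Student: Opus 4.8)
The plan is to argue by induction on $n$, with the single-step case $n=1$ carrying essentially all of the combinatorial content. For $n=1$, \eqref{eqn:M_q(A;c)} identifies $M_1(\L;b)$ with the transportation polytope of probability measures on $\L=\{0,1\}^m$ whose coordinate marginals are prescribed by $E_p(\ell_i)=b_i$; it is compact and convex, and $p\mapsto E_p(f)$ is linear, so the maximum is attained on a nonempty compact convex set $K$ of maximizers. I would then show that the comonotone vertex $q$ of \eqref{E:def supervertex} is the unique maximizer, which identifies $\max_{p\in M_1(\L;b)}E_p(f)$ with $E_q(f)$.

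The engine of the base case is an \emph{uncrossing move}. Given $p\in M_1(\L;b)$ and two incomparable $\lambda,\lambda'\in\supp(p)$, I would consider the signed measure $\mu=\delta_{\lambda\vee\lambda'}+\delta_{\lambda\wedge\lambda'}-\delta_{\lambda}-\delta_{\lambda'}$. Because $\max\{a,a'\}+\min\{a,a'\}=a+a'$ for $a,a'\in\{0,1\}$, the measure $\mu$ has vanishing coordinate marginals, so $p+t\mu\in M_1(\L;b)$ for small $t>0$; and supermodularity gives $E_{p+t\mu}(f)=E_p(f)+t\bigl(f(\lambda\vee\lambda')+f(\lambda\wedge\lambda')-f(\lambda)-f(\lambda')\bigr)\ge E_p(f)$. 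To force the support onto a chain I would fix a strictly supermodular auxiliary function, for instance $\sigma(\lambda)=\sum_{i<j}\lambda(i)\lambda(j)$, for which the analogous bracket is strictly positive on every incomparable pair, and select $p^\ast\in K$ maximizing $E_{-}(\sigma)$ over $K$. If $\supp(p^\ast)$ contained an incomparable pair, the move would keep $p^\ast+t\mu$ inside $K$ (as $p^\ast$ already maximizes $E_{-}(f)$) while strictly increasing $E_{-}(\sigma)$, a contradiction. Hence $\supp(p^\ast)$ is a chain, so the events $\{\ell_i=1\}$ are totally ordered by inclusion; together with $b_1\ge\cdots\ge b_m$ this nests them as $\{\ell_1=1\}\supseteq\cdots\supseteq\{\ell_m=1\}$, whence $p^\ast(\rho_j)=b_j-b_{j+1}=q_j$ and $p^\ast=q$ by \eqref{E:def qj}.

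For the inductive step I would write $\L^n=\L\times\L^{n-1}$ and disintegrate $p\in M_n(\L;b)$ over its first coordinate. The law of total expectation, with Lemma \ref{lem:products conditional general}\ref{lem:products conditional general:box restriction} for the conditionals, gives $E_p(f)=\sum_{\lambda}p(\overline\lambda)\,E_{p_{|\overline\lambda}}\bigl(f(\lambda,-)\bigr)$. The step-$0$ equation \eqref{eqn:uniform distillation} shows the first-coordinate marginal $\bar p$ of $p$ lies in $M_1(\L;b)$, and dividing the step-$k$ equations ($k\ge1$) by $p(\overline\lambda)$ shows $p_{|\overline\lambda}\in M_{n-1}(\L;b)$ whenever $p(\overline\lambda)>0$. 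Each slice $f(\lambda,-)$ is fibrewise supermodular on $\L^{n-1}$, so the induction hypothesis bounds $g(\lambda):=E_{p_{|\overline\lambda}}(f(\lambda,-))\le G(\lambda):=E_{q^{\otimes(n-1)}}(f(\lambda,-))$. Crucially, $G$ is itself supermodular: by the step-$1$ fibrewise condition each section $\lambda\mapsto f(\lambda,\omega'')$ is supermodular, and $G$ is their combination with the nonnegative weights $q^{\otimes(n-1)}(\omega'')$, hence supermodular by Proposition \ref{P:supermodular facts}\ref{P:supermodular:linearity}. The tower identity reads $E_p(f)=E_{\bar p}(g)$, so the base case applied to $\bar p$ and $G$ gives $E_p(f)=E_{\bar p}(g)\le E_{\bar p}(G)\le E_q(G)=E_{q^{\otimes n}}(f)$, the final equality being Lemma \ref{lem:products conditional general}\ref{lem:products conditional general:E in stages} applied to the product $q^{\otimes n}=q\otimes q^{\otimes(n-1)}$; equality holds at $p=q^{\otimes n}$.

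I expect the \textbf{main obstacle} to be the base case, and within it the passage from an arbitrary maximizer to one supported on a chain: supermodularity by itself only guarantees that uncrossing does not decrease $E_{-}(f)$, so the real work is the termination/selection device (optimizing the strictly supermodular $\sigma$ over the maximizing face $K$) that produces a chain-supported optimizer, after which the comonotone identification $p^\ast=q$ is forced by the marginal constraints. By contrast the inductive step is routine once $G$ is seen to be supermodular, which follows immediately from Proposition \ref{P:supermodular facts}\ref{P:supermodular:linearity} and the first-step fibrewise condition.
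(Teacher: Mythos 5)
The paper does not actually prove this statement: it is imported verbatim as Theorem A.12 of \cite{MR4553406}, so there is no in-paper proof to compare your argument against. Judged on its own, your proof is correct and self-contained. The base case is the classical uncrossing argument for maximizing a supermodular function over a polytope with prescribed coordinate marginals: the signed measure $\mu=\delta_{\lambda\vee\lambda'}+\delta_{\lambda\wedge\lambda'}-\delta_{\lambda}-\delta_{\lambda'}$ does preserve the constraints because $\max\{a,a'\}+\min\{a,a'\}=a+a'$ on $\{0,1\}$; your auxiliary function $\sigma$ has a strictly positive supermodularity bracket on every incomparable pair (incomparability forces a discordant pair of coordinates); and selecting $p^\ast$ to maximize $E_{-}(\sigma)$ over the face $K$ of $f$-maximizers cleanly forces chain support. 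The identification $p^\ast=q$ then follows because the up-sets $\{\ell_i=1\}$ of the chain are totally ordered by inclusion, and $b_1\geq\cdots\geq b_m$ together with the positivity of the point masses on the support forces them to be nested in the order $\{\ell_1=1\}\supseteq\cdots\supseteq\{\ell_m=1\}$, whence the support sits inside $\{\rho_0,\dots,\rho_m\}$ with the masses of \eqref{E:def qj}. The inductive step is also sound: the first-coordinate marginal lies in $M_1(\L;b)$ by the $k=0$ instance of \eqref{eqn:uniform distillation}, the conditionals lie in $M_{n-1}(\L;b)$, and the key point that $G(\lambda)=E_{q^{\otimes(n-1)}}(f(\lambda,-))$ is supermodular is exactly Proposition \ref{P:supermodular facts}\ref{P:supermodular:linearity} applied to the sections $\lambda\mapsto f(\lambda,\omega'')$. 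Two small corrections: your claim that $q$ is the \emph{unique} maximizer is false in general (take $f$ constant) and is not needed --- your argument only shows, and only needs, that the selected $p^\ast$ equals $q$, hence that $q$ attains the maximum; and in the tower identity you should sum only over $\lambda$ with $p(\overline{\lambda})>0$, where the conditional is defined, the omitted terms contributing nothing.
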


\section{Risk neutral measures and calculation of $C_{\max}(F,k)$}\label{section:RN C_max}

In this section we return to the $n$-step discrete time market in the introduction.

\subsection{Risk neutral measures}
Let us keep in mind the formal model we introduced in Section \ref{section:formal model}. Recall that 
$\{S(k)\}_{k=0,\dots,n}$ stands for the $\RR^{m}$-valued stock price process, where $S(k)$ is a vector of stock prices at time $k$: $S(k)=\left(S_1(k),\dots,S_m(k)\right)$. The set of {\em risk-neutral} measures for $\{S(k)\}_{k=0,\dots,n}$ consists of non-degenerate probability measures $p \in \oDelta(\L^n)$ such that for any $0 \leq k < k+ \ell \leq n$ where $k,\ell \geq 0$ and any state of the world $\omega \in \L^k$
\begin{equation}\label{eqn:risk neutral def}
E_p(S_i(k+\ell)| \overline{\omega}) = R^\ell \cdot S_i(k)(\omega), \qquad (1 \leq i \leq m).
\end{equation}
We denote
\[
\OP{RN} = \{ p \in \oDelta(\L^n) : \text{$p$ is risk neutral}\}.
\]
An easy induction argument on $\ell \geq 1$ shows that \eqref{eqn:risk neutral def} is equivalent to the condition that for any $0 \leq k \leq n-1$ and any $\omega \in \L^k$
\[
E_p(S_i(k+1)| \overline{\omega}) = R \cdot  S_i(k)(\omega).
\]
Let us write $R$ for the constant vector $(R,\dots,R) \in \RR^m$. Then  by Definition \ref{def:martingale} and 
in the notation introduced in Section \ref{subsection:unform random process} we have that
\[
\OP{RN}=\oM(S(k);R).
\]
Let $\psi \colon \L \to \RR^m$ denote the random vector $\psi=(\psi_1,\dots,\psi_m)$.
By Propositions \ref{prop:S(k) uniform quotients} and \ref{prop:M_n of uniform}
\begin{equation}\label{eqn:RN=oM_n(psi;r)}
\OP{RN} = \oM_n(\psi;R)
\end{equation}
Recall that $\psi_i = D_i+(U_i-D_i)\ell_i$, see \eqref{E:def psii}, and that $b_i=\tfrac{R-D_i}{U_i-D_i}$, see \eqref{E:def bi}.
Inspection of \eqref{E:def elli} shows that $\ell=(\ell_1,\dots,\ell_m)$ is the inclusion $\L \subseteq \RR^m$.
Inspection of Definition \ref{def:M_n(phi;c)} then shows that
%
\begin{equation}\label{eqn:oM_n(psi;R)=oM_n(L,b)}
\oM_n(\psi;R) = \oM_n(\ell;b) = \oM_n(\L;b).
\end{equation}

\begin{proposition}\label{prop:oM(L,b) not empty}
Consider $n \geq 1$.
Suppose $b \in \RR^m$ satisfies $0<b_1,\dots, b_m<1$.
Then $\oM_n(\L;b)$ is not empty and its closure in $\Delta(\L^n)$ is $M_n(\L;b)$.
\end{proposition}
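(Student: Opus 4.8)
The plan is to establish non-emptiness by exhibiting an explicit strictly positive martingale measure, and then to deduce the closure statement by a convexity argument based at that measure.

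First I would treat the single step $n=1$. By \eqref{eqn:M_q(A;c)}, a measure $p\in\Delta(\L)$ lies in $M_1(\L;b)$ precisely when $E_p(\ell)=b$, i.e. $\sum_{\lambda}\lambda(i)p(\lambda)=b_i$ for each $i$; since $\ell_i(\lambda)=\lambda(i)$, this says that under $p$ the $i$-th coordinate equals $1$ with probability $b_i$. I would therefore take the product measure on $\L=\{0,1\}^m$ whose coordinates are independent, the $i$-th being $1$ with probability $b_i$:
\[
p_b(\lambda)=\prod_{i=1}^m b_i^{\lambda(i)}\,(1-b_i)^{1-\lambda(i)}.
\]
Because $0<b_i<1$ for every $i$, this $p_b$ is strictly positive, and by independence $E_{p_b}(\ell_i)=b_i$, so $p_b\in\oM_1(\L;b)$. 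For general $n\geq 1$, Lemma \ref{lem:tensor M psi c} gives $p_b^{\otimes n}\in M_n(\L;b)$, and since $p_b^{\otimes n}(\lambda^1,\dots,\lambda^n)=\prod_{j=1}^n p_b(\lambda^j)>0$ this tensor power is again strictly positive. Hence $p_b^{\otimes n}\in\oM_n(\L;b)$, and in particular $\oM_n(\L;b)\neq\emptyset$.

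For the closure statement, recall that $M_n(\L;b)$ is cut out of the simplex $\Delta(\L^n)$ by finitely many hyperplanes, so it is a closed convex subset of $\Delta(\L^n)$. As $\oM_n(\L;b)\subseteq M_n(\L;b)$, passing to closures gives $\overline{\oM_n(\L;b)}\subseteq M_n(\L;b)$. For the reverse inclusion I would fix the non-degenerate measure $p_b^{\otimes n}\in\oM_n(\L;b)$ constructed above, take an arbitrary $p\in M_n(\L;b)$, and form the segment
\[
p_t=(1-t)\,p+t\,p_b^{\otimes n},\qquad t\in(0,1].
\]
Convexity of $M_n(\L;b)$ keeps $p_t$ in $M_n(\L;b)$, while $p_t(\lambda)\geq t\,p_b^{\otimes n}(\lambda)>0$ shows that $p_t$ is non-degenerate, so $p_t\in\oM_n(\L;b)$. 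Letting $t\to 0^+$ we get $p_t\to p$, whence $p\in\overline{\oM_n(\L;b)}$ and therefore $M_n(\L;b)\subseteq\overline{\oM_n(\L;b)}$. Combining the two inclusions yields the claimed equality.

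I do not expect a serious obstacle here: the only step requiring an actual construction is the non-emptiness of $\oM_1(\L;b)$, and the independent product measure with prescribed marginals settles it immediately; everything else is a formal consequence of Lemma \ref{lem:tensor M psi c} together with the convexity and closedness of $M_n(\L;b)$.
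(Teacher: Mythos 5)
Your proof is correct and follows essentially the same route as the paper: an explicit non-degenerate measure in $M_1(\L;b)$, tensored up via Lemma \ref{lem:tensor M psi c}, followed by the convex-combination argument for the closure. In fact your independent Bernoulli product $p_b$ is literally the same measure as the paper's $\mu$ (which is defined via the odds ratios $c_i=b_i/(1-b_i)$ and then normalized), just written in a more transparent form.
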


\begin{proof}
First, we will prove that $\oM_1(\L;b)$ is not empty.
Since $0<b_1,\dots,b_m<1$ we may define 
\[
c_i = \tfrac{b_i}{1-b_i} \qquad (1 \leq i \leq m).
\]
Clearly $c_i>0$.
For any $J \subseteq \{1,\dots ,m\}$ set
\[
c_J= \prod_{j \in J} c_j.
\]
Recall that $\L \approx \wp(\{1,\dots,m\})$ via $\lambda \mapsto \supp(\lambda)$.
Set
\[
C \, \overset{\text{def}}{=} \, \sum_{\lambda \in \L} c_{\supp(\lambda)} = \sum_{J \subseteq \{1,\dots,m\}} c_J = \prod_{j=1}^m (1+c_j).
\]

Define a function $\mu \colon \L \to \RR$ by
\[
\mu(\lambda) = \frac{c_{\supp(\lambda)}}{C}.
\]
Then $\mu \in \Delta(\L)$ since $\mu \geq 0$ and
\[
\sum_{\lambda \in \L} \mu(\lambda) = \tfrac{1}{C} \sum_{\lambda \in \L} c_{\supp(\lambda)} = 1.
\]
Since $c_i>0$ for all $i$ it follows that $\mu>0$, i.e $\mu \in \oDelta(\L)$.
We claim that $\mu \in M_1(\L;b)$.
By \eqref{eqn:M_q(A;c)} we only need to check that $E_{\mu}(\ell_i)=b_i$ for all $1 \leq i \leq m$.
Indeed,
\begin{multline*}
E_{\mu}(\ell_i) = \sum_{\lambda \in \L} \mu(\lambda) \lambda(i)
= \sum_{\lambda \in \L, \lambda(i)=1} \tfrac{c_{\supp(\lambda)}}{C}
= \tfrac{1}{C} \sum_{J \subseteq \{1,\dots,\hat{i},\dots,m\}} c_i c_J
\\
= \tfrac{c_i}{\prod_{j =1}^m (1+c_j)}  \prod_{j \in \{1,\dots,\hat{i},\dots,m\}} (1+c_j) 
= \tfrac{c_i}{1+c_i} = b_i.
\end{multline*}
Now that we have found $\mu \in \oM_1(\L;b)$, Lemma \ref{lem:tensor M psi c} shows that $\mu^{\otimes n} \in \oM_n(\L;b)$.
This proves that $\oM_n(\L;b)$ is not empty.
It remains to show that its closure in $\Delta(\L^n)$ is $M_n(\L;b)$.

Consider some $p \in M_n(\L;b)$.
Since $M_n(\L;b)$ is a polytope, it is a convex set, so for any $0 \leq t \leq 1$,
\[
z_t:=(1-t) \cdot p + t \cdot \mu^{\otimes n} \in M_n(\L;b).
\]
Clearly $z_0=p$ and $z_t>0$ for any $t>0$ since $\mu^{\otimes n}>0$.
Thus $z_t \in \oM_n(\L;b)$ for all $0<t \leq 1$.
It follows that $p$ is in the closure of $\oM_n(\L;b)$.
This shows that $M_n(\L;b)$ is contained in the closure of $\oM_n(\L;b)$ and the reverse inclusion is clear because $M_n(\L;b)$ is compact.
\end{proof}

\subsection{Pricing contingent claims of class $\mathfrak{S}$ in the multi-step model.}\label{SS:c-max}
Let $F$ be a 
contingent claim in the $n$-step model from Section \ref{S:intro}.
A {\em no-arbitrage price} of $F$ at time $k=0,1,\dots , n-1$ and at the state of the world $\omega \in \L^k$ is defined as follows.
\begin{equation}\label{D:def C n step}
	C_p(F,k)(\omega)= R^{k-n} \cdot E_p(F|\overline{\omega}),
\end{equation}
where $p$ is a risk-neutral measure, i.e $p\in \OP{RN}$. 
The no-arbitrage price $C_p(F,k)$ depends on the state of the world at time $k$, i.e some $\omega \in \L^k$. 

Let us fix a state of the world $\omega \in \L^k$ at time $k$. 
If $m \geq 2$ the model is incomplete, so by varying the risk-neutral measure $p$, the no-arbitrage prices $C_p(F,k)(\omega)$ form an open interval
\begin{equation}\label{D:def price interval n}
	\left(C_{\min}(F,k)(\omega),C_{\max}(F,k)(\omega)\right)=\left\{ C_p(F,k)(\omega)\in \RR\ |\ p\in \OP{RN} \right\}.
\end{equation}
If $m=1$ (i.e the model is complete) or if $k=n$ this open interval degenerates to a closed interval consisting of a  single point.
We are interested in the upper bound $C_{\max}(F,k)(\omega)$ of this interval for given time $0 \leq k \leq n-1$ and state of the world $\omega \in \L^k$,
\begin{equation}\label{eq:Cmax sup formula}
C_{\max}(F,k)(\omega) = \sup_{p \in \OP{RN}} C_p(F,k)(\omega).
\end{equation}

We arrive at the main result of this subsection.
Recall that $F$ in class $\mathfrak{S}$ is a random variable on $\L^n$.
With the notation in Subsection \ref{subsec:product and conditional prob}, if $\omega \in \L^k$ then  $F(\omega,-)$ is a random variable on $\L^{n-k}$. Recall the definition of the supermodular vertex, a special risk-neutral measure $q$ (see Subsection \ref{sec:supermodular vertex}).

\begin{theorem}\label{theorem:C_max with supervertex}
Let $F$ be a contingent claim of class $\mathfrak{S}$ in the $n$-step market model, where $n\geq 1$.	
Then for any state of the world $\omega\in \L^k$ at time $0 \leq k \leq n-1$
\[
C_{\max}(F,k)(\omega) = R^{k-n} \cdot E_{q^{\otimes n-k}}(F(\omega-)).
\]
\end{theorem}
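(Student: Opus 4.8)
The plan is to start from the variational characterization \eqref{eq:Cmax sup formula} together with \eqref{D:def C n step}, which gives
\[
C_{\max}(F,k)(\omega) = R^{k-n}\sup_{p\in\OP{RN}} E_p(F\mid\overline{\omega}),
\]
and to reduce the supremum to one to which Theorem \ref{theorem:supermodular vertex maximium} applies. First I would invoke \eqref{eqn:RN=oM_n(psi;r)} and \eqref{eqn:oM_n(psi;R)=oM_n(L,b)} to rewrite $\OP{RN}=\oM_n(\L;b)$, and then use Lemma \ref{lem:products conditional general}\ref{lem:products conditional general:box restriction} to express the conditional expectation as $E_p(F\mid\overline{\omega})=E_{p_{|\overline{\omega}}}(F(\omega,-))$, a quantity depending on $p$ only through its restriction $p_{|\overline{\omega}}\in\oM_{n-k}(\L;b)$. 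For $0<k<n$, Corollary \ref{cor:oM(L,b) conditional surjective} (whose hypothesis that $\oM_k(\L;b)$ is nonempty is supplied by Proposition \ref{prop:oM(L,b) not empty}) says this restriction map is onto $\oM_{n-k}(\L;b)$; for $k=0$ the restriction is simply the identity and $F(\omega,-)=F$. Either way,
\[
\sup_{p\in\oM_n(\L;b)} E_p(F\mid\overline{\omega}) = \sup_{p'\in\oM_{n-k}(\L;b)} E_{p'}(F(\omega,-)).
\]

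Next I would identify the function being maximized as fibrewise supermodular. By Proposition \ref{prop:class S is fibrewise supermodular} $F$ is fibrewise supermodular on $\L^n$, and fixing the first $k$ coordinates to $\omega$ leaves the slice $F(\omega,-)\colon\L^{n-k}\to\RR$ fibrewise supermodular on $\L^{n-k}$: varying the $j$-th of the remaining coordinates is a special case of the defining condition of Definition \ref{def:fibrewise supermodular} applied to $F$ at the $(k+j)$-th coordinate. Theorem \ref{theorem:supermodular vertex maximium}, applied with $n-k$ in place of $n$, then yields $\max_{p'\in M_{n-k}(\L;b)} E_{p'}(F(\omega,-)) = E_{q^{\otimes(n-k)}}(F(\omega,-))$.

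The one genuine gap is that the supremum we have is taken over the non-degenerate measures $\oM_{n-k}(\L;b)$, whereas the maximizing measure $q^{\otimes(n-k)}$ is degenerate, being supported only on tuples of the $\rho_j$'s, and hence lies on the boundary of the polytope, outside $\oM_{n-k}(\L;b)$. To close this gap I would use that $p'\mapsto E_{p'}(F(\omega,-))$ is the restriction of a linear, hence continuous, function on $\RR^{\L^{n-k}}$, and that by Proposition \ref{prop:oM(L,b) not empty} the set $\oM_{n-k}(\L;b)$ is dense in the compact polytope $M_{n-k}(\L;b)$. The supremum of a continuous function over a dense subset equals its maximum over the whole compact set, so
\[
\sup_{p'\in\oM_{n-k}(\L;b)} E_{p'}(F(\omega,-)) = \max_{p'\in M_{n-k}(\L;b)} E_{p'}(F(\omega,-)) = E_{q^{\otimes(n-k)}}(F(\omega,-)).
\]
Combining the three displays and reinstating the factor $R^{k-n}$ gives the claimed formula. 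The main obstacle is precisely this last passage from the open set of risk-neutral measures to its closure: the extremal measure is never itself risk-neutral, so Theorem \ref{theorem:supermodular vertex maximium} cannot be applied directly to $\OP{RN}$ and must be mediated by the density statement of Proposition \ref{prop:oM(L,b) not empty} together with the continuity of expectation.
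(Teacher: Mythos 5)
Your proposal is correct and follows essentially the same route as the paper's own proof: the same reduction of $\OP{RN}$ to $\oM_n(\L;b)$, the same use of Lemma \ref{lem:products conditional general}\ref{lem:products conditional general:box restriction} and Corollary \ref{cor:oM(L,b) conditional surjective} (with the $k=0$ case treated separately), the same passage from the supremum over non-degenerate measures to the maximum over the compact closure via Proposition \ref{prop:oM(L,b) not empty} and continuity of $p\mapsto E_p(X)$, and the same final appeal to Proposition \ref{prop:class S is fibrewise supermodular} and Theorem \ref{theorem:supermodular vertex maximium}. The ``gap'' you flag is exactly the point the paper also addresses, and you close it the same way.
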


\begin{proof}
Equations \eqref{eqn:RN=oM_n(psi;r)}, \eqref{eqn:oM_n(psi;R)=oM_n(L,b)} and \eqref{eq:Cmax sup formula}, together with \eqref{D:def C n step} and Lemma \ref{lem:products conditional general}\ref{lem:products conditional general:box restriction} show that
\[
C_{\max}(F,k)(\omega) = \sup_{p \in \oM_n(\L;b)} \, R^{k-n} \cdot E_p(F|\overline{\omega}) 
 = \sup_{p \in \oM_n(\L;b)} \, R^{k-n} \cdot E_{p_{|\overline{\omega}}}(F(\omega-)),
\]
where we view $p_{|\overline{\omega}}$ as a probability 
measure on $\overline{\omega} \approx \L^{n-k}$.
Notice that if $k=0$ then $\omega$ is the empty word, so $p_{|\overline{\omega}}=p$ and $\oM_n(\L;b) \xto{p \mapsto p_{|\overline{\omega}}} \oM_{n-k}(\L;b)$ is the identity map.
Proposition \ref{prop:oM(L,b) not empty} and Corollary \ref{cor:oM(L,b) conditional surjective} now show that for any $0 \leq k \leq n-1$
\[
\sup_{p \in \oM_n(\L;b)} E_{p_{|\overline{\omega}}}(F(\omega-)) =
\sup_{p \in \oM_{n-k}(\L;b)} E_{p}(F(\omega-)).
\]
Since $p \mapsto E_p(X)$ is a continuous function, Proposition \ref{prop:oM(L,b) not empty} and the fact that $M_{n-k}(\L;b)$ is a compact subset of $\Delta(\L^{n-k})$  imply that
\[
\sup_{p \in \oM_{n-k}(\L;b)} E_{p}(F(\omega-)) =
\max_{p \in M_{n-k}(\L;b)} E_{p}(F(\omega-)).
\]
By Proposition \ref{prop:class S is fibrewise supermodular} $F$ is a fibrewise supermodular function $F \colon \L^n \to \RR$.
But then it is immediate from the definition of fibrewise supermodularity that $F(\omega-) \colon \L^{n-k} \to \RR$ is also fibrewise supermodular.
We apply Theorem \ref{theorem:supermodular vertex maximium} to obtain
\[
\max_{p \in M_{n-k}(\L;b)} E_{p}(F(\omega-)) = E_{q^{\otimes n-k}}(F(\omega-)).
\]
The result follows.
\end{proof}

\begin{proof}[Proof of Proposition \ref{P:Cmax formula}]
We assume $F$ has the form in \eqref{Eq:payoff}.
By Theorem \ref{theorem:C_max with supervertex}
\[
C_{\max}(F,k)(\omega) = 
R^{k-n} \cdot E_{q^{\otimes n-k}}(F(\omega-)) 
\]
Recall the definition of $q \in \Delta(\L)$ in \eqref{E:def supervertex} and that $q^{\otimes n-k}$ is the resulting  product measure on $\L^{n-k}$.
With the notation in \eqref{eqn:Pkm}, \eqref{E:def qJ} and \eqref{E:def chiJ} and since $S_i(k)$ is given by \eqref{eqn:S_(k) formula with psi}, 
\begin{align*}
E_{q^{\otimes n-k}}(F(\omega-)) 
&=
\sum_{J \in \C P_{n-k}(m)} q^{\otimes n-k}(\rho_{j_1},\dots,\rho_{j_{n-k}}) \cdot h\left( \sum_{i=0}^m \gamma_i \cdot S_i(k)(\omega,\rho_{j_1},\dots,\rho_{j_{n-k}})\right)
\\
&=
\sum_{J \in \C P_{n-k}(m)} q_J  \cdot h\left( \sum_{i=0}^m \gamma_i \cdot S_i(k)(\omega) \cdot \chi_i(J) \right).
\end{align*}
The result follows.
\end{proof}

\section{Minimum cost super-hedging strategy}\label{S:proof}

Let $F$ be a contingent claim of class $\mathfrak{S}$ in our $n$-step binomial model. 
A hedging strategy $\beta$ in this market consists of random variables $\beta_i(k)$ on $\L^n$ where $0 \leq i \leq m$ and $0 \leq k \leq n-1$ such that $\beta_i(k)$ depend only on the state of the world at time $k$.
Thus, we view $\beta_i(k)$ as real valued functions from $\L^k$.
They give rise to a hedging portfolio $V_\beta(k)$ whose set up cost at the state of the world $\omega$ at time $k$ is
\[
V_\beta(k)(\omega) = \sum_{i=0}^m \beta_i(k)(\omega) \cdot S_i(k)(\omega).
\]
In light of \eqref{E:def prices Si} and \eqref{eqn:def PSI_i(k)} the value of this portfolio at a subsequent state of the world $\omega \lambda$ at time $k+1$, where $\lambda \in \L$, is
\begin{equation}\label{eqn:V_beta+1}
V_\beta^{+1}(k)(\omega\lambda) = \sum_{i=0}^m \beta_i(k)(\omega) \cdot S_i(k+1)(\omega \lambda) 
=
\sum_{i=0}^m \beta_i(k)(\omega) \cdot S_i(k)(\omega) \cdot \psi_i(\lambda). 
\end{equation}
Given a state of the world $\omega \in \L^k$ at time $k$ we obtain a random variable $\Phi_{\beta,\omega}$ on $\L$ 
\begin{equation}\label{eq:def Phi_beta,omega}
\Phi_{\beta,\omega}(\lambda) = V_\beta^{+1}(k)(\omega\lambda), \qquad (\lambda \in \L).
\end{equation}
Recall the definition of a special risk-neutral measure $q$, the supermodular vertex (see Section \ref{sec:supermodular vertex} ).
By \eqref{eqn:E_q(ell)=b}, \eqref{E:def psii}, \eqref{E:def bi} and the linearity of the expectation, $E_q(\psi_i)=R$.
Hence
\begin{equation}\label{eq:Eqv+1}
E_q(\Phi_{\beta,\omega})=
\sum_{i=0}^m \beta_i(k)(\omega) \cdot S_i(k)(\omega) \cdot E_q(\psi_i)
=
R \cdot V_\beta(k)(\omega). 
\end{equation}
\begin{proof}[Proof of Theorem \ref{T:minimum cost hedgiging theorem}]
Let $\beta$ be a hedging strategy with associated portfolio $V_\beta$.
By Definition \ref{defn:hedging strategy}, $\beta$ is a super-hedge if for every state of the world $\omega$ at time $0 \leq k \leq n-1$ 
\begin{equation}\label{eq:beta suer hedge}
V_\beta^{+1}(k)(\omega \lambda) \geq C_{\max}(F,k+1)(\omega \lambda) \qquad \text{for every $\lambda \in \L$.}
\end{equation}

The purpose of this theorem is to find a super-hedging strategy $\alpha$ whose setup cost at any time $k$  is no larger than the setup cost of any other super-hedging strategy $\beta$, i.e
\[
V_\alpha(k) \leq V_\beta(k)  \qquad \text{for any $0 \leq k \leq n-1$},
\]
and show that, in fact, at any time $0 \leq k \leq n-1$ its setup cost is equal to $C_{\max}(F,k)$, i.e
\[
V_\alpha(k)(\omega) = C_{\max}(F,k)(\omega) \qquad \text{for any state of the world $\omega \in \L^k$.}
\]

{\bf Claim 1:} Consider a state of the world $\omega \in \L^k$ for some $0 \leq k \leq n-1$. Recall formula \eqref{E:def rhoj} which defines special sample space elements $\rho_j \in \L$, $0 \leq j \leq m$.

The system of $m+1$ linear equations (one for each $0 \leq j \leq m$) in the $m+1$ unknowns $\alpha_0,\dots,\alpha_m$
\begin{equation}\label{eq:m+1 equations alpha}
  \sum_{i=0}^m \alpha_i \cdot S_i(k)(\omega) \cdot \psi_i(\rho_j) = C_{\max}(F,k+1)(\omega\rho_j), \qquad (0 \leq j \leq m)
\end{equation}
has a unique solution.
In fact, if we write $\alpha_*$ for the column vector $(\alpha_i)_{i=0}^m$ and $c_*$ for the column vector $(C_{\max}(F,k+1)(\omega\rho_j))_{j=0}^m$ then the system \eqref{eq:m+1 equations alpha} has the form
\[
M \alpha_* = c_*
\]
where $M=Q^{-1}\cdot N \cdot T(\omega)$ for the matrices $Q,N,T(\omega)$ defined in the statement of the theorem.

{\bf Proof:}
By inspection of \eqref{eq:m+1 equations alpha} $M$ is the $(m+1)\times(m+1)$ matrix with entries $M_{j,i}=S_i(k)(\omega) \cdot \psi_i(\rho_j)$ where $0 \leq i,j \leq m$.
It follows from \eqref{E:def psii def} and \eqref{E:def psi0} and from \eqref{E:def chii} that $\psi_i(\rho_j)=\chi_i(j)$.
By inspection $M$ can be written as the product 
\begin{equation}
M=
\underbrace{
\begin{pmatrix*}[r]
1 & \chi_1(0) & \cdots & \chi_m(0) \\
1 & \chi_1(1) & \cdots & \chi_m(1) \\
\vdots & \vdots &      & \vdots \\
1 & \chi_1(m) & \cdots & \chi_m(m)
\end{pmatrix*}
}_{M'} 
\cdot 
\underbrace{
\begin{pmatrix*}[r]
RS_0(k)(\omega) &  \\
                & S_1(k)(\omega) \\
                &                & \ddots \\
                &                &        & S_m(k)(\omega)
\end{pmatrix*}
}_{T(\omega)}.
\end{equation}
Notice that $T(\omega)$ is invertible since $R, S_i(k)(\omega)>0$. 

It follows from \eqref{E:def chii} that for any $1 \leq j \leq m$ and any $0 \leq i \leq m$
\[
\chi_i(j) - \chi_i(j-1) = 
\left\{
\begin{array}{ll}
	0 & i \neq j \\
	U_i-D_i & i=j
\end{array}
\right.
\]
With this in mind, and with $Q,N$ defined in \eqref{D:def_matrices}, a straightforward calculation gives
\[
QM'=N.
\]
Both $Q$ and $N$ are invertible since they are upper/lower triangular matrices with non-zero entries on the diagonal.
It follows that $M=Q^{-1}NT(\omega)$ is invertible.
In particular the system $M\alpha_*=c_*$ has a unique solution $\alpha_* = M^{-1} c_*$.
This completes the proof of Claim 1.

Claim 1 allows us to define random variables $\alpha_i(k)$ on $\L^n$, namely functions $\alpha_i(k) \colon \L^n \to \RR$, for $0 \leq i \leq m$ and $0 \leq k \leq n-1$.
Specifically, for any $0 \leq i \leq m$, any $0 \leq k \leq n-1$ and any $\omega \in \L^k$ let the vector $\alpha(k)(\omega) \in \RR^{m+1}$ be the unique solution of \eqref{eq:m+1 equations alpha}.
That is, for any $0 \leq k \leq n-1$ and any $\omega \in \L^k$
\[
(\alpha_i(k)(\omega))_{i=0}^m = T(\omega)^{-1} N^{-1} Q \cdot (C_{\max}(F,k+1)(\omega \rho_j))_{j=0}^m
\]
This defines a hedging strategy $\alpha$ with associated portfolio $V_\alpha(k)$.
Since the left hand side of the $j^\text{th}$ equation in \eqref{eq:m+1 equations alpha} is $V^{+1}_\alpha(k)(\omega \rho_j)$, we see that this hedging portfolio satisfies
\begin{equation}\label{eq:Valpha+1 C_max}
V^{+1}_\alpha(k)(\omega \rho_j) = C_{\max}(F,k+1)(\omega \rho_j)
\end{equation}
for any state of the world $\omega \in \L^k$ at time $0 \leq k \leq n-1$ and any $0 \leq j \leq m$.

{\bf Claim 2:} $\alpha_i(k)$ define a super-hedging strategy.
Moreover, the value of the portfolio $V_\alpha(k)$ at time $k$ is equal to $C_{\max}(F,k)$, i.e $V_\alpha(k)(\omega)=C_{\max}(F,k)(\omega)$ for all $0 \leq k \leq n-1$ and all $\omega \in \L^k$.

{\bf Proof:}
To show that $\alpha_i(k)$ define a super-hedge we need to show that for any state of the world $\omega$ at time $k$ and any $\lambda \in \L$,
\begin{equation}\label{eq:alpha superhedge check}
\underbrace{V^{+1}_\alpha(k)(\omega\lambda)}_{\Phi_{\alpha,\omega}(\lambda)} \geq \underbrace{C_{\max}(F,k+1)(\omega\lambda)}_{\Xi_\omega(\lambda)}, \qquad (\lambda \in \L).
\end{equation}
The notation for the left hand side coincides with \eqref{eq:def Phi_beta,omega}.
For the right hand side, notice that $\Xi_\omega$ is a random variable on $\L$.
Suppose that \eqref{eq:alpha superhedge check} fails.
Among all $\lambda \in \L$ for which $\Phi_{\omega,\alpha}(\lambda)<\Xi_\omega(\lambda)$ choose one with maximal possible $j$ such that $\rho_j \preceq \lambda$ (see \eqref{E:def preceq on L} and \eqref{E:def rhoj}).
If $j=m$ then $\rho_m \preceq \lambda$ implies that $\lambda=\rho_m$ because $\rho_m$ is the maximum element $(1,\dots,1)$ in $\L$.
This contradicts \eqref{eq:Valpha+1 C_max} which reads $\Phi_{\alpha,\omega}(\rho_m)=\Xi_\omega(\rho_m)$.
We deduce that $j<m$.

Set $\lambda'=\lambda \vee \rho_{j+1}$ (see \eqref{E:def vee wedge}).
By the maximality of $j$ we get that $\lambda \wedge \rho_{j+1}=\rho_j$.
It is clear from \eqref{eq:def Phi_beta,omega}, \eqref{eqn:V_beta+1} and from \eqref{E:def psii} that $\Phi_{\omega,\alpha} \colon \L \to \RR$ is an affine function, hence it is modular by Proposition \ref{P:supermodular facts}\ref{P:supermodular:affine}. It follows from Definition \ref{supermodular_function} that
\begin{equation}\label{eqn:Phi equality}
\Phi_{\omega,\alpha}(\lambda')+\Phi_{\omega,\alpha}(\rho_j) = \Phi_{\omega,\alpha}(\lambda) + \Phi_{\omega,\alpha}(\rho_{j+1}).
\end{equation}
It follows from Proposition \ref{P:Cmax formula} and from \eqref{E:def prices Si} that
\[
C_{\max}(F,k+1)(\omega\lambda) =
R^{k+1-n} \sum_{J \in \C P_{n-k-1}(m)} q_J\cdot h\left( \sum_{i=0}^m \gamma_i \cdot \chi_i(J) \cdot S_i(k)(\omega)\cdot \psi_i(\lambda) \right).
\]
By Proposition \ref{P:supermodular facts}\ref{P:supermodular:convex-modular} the functions $\lambda \mapsto h\left( \sum_{i=0}^m \gamma_i \cdot \chi_i(J) \cdot S_i(k)(\omega)\cdot \psi_i(\lambda) \right)$ are supermodular.
By part \ref{P:supermodular:linearity} of that proposition, the function  $\lambda \mapsto C_{\max}(F,k+1)(\omega\lambda)$ is supermodular.
That is, $\Xi_\omega$ is supermodular.
Therefore, see \eqref{D:def_supermodular_fun},
\begin{equation}\label{eqn:Xi inequality}
\Xi_\omega(\lambda')+\Xi_\omega(\rho_j) \geq \Xi_\omega(\lambda) + \Xi_\omega(\rho_{j+1})
\end{equation}
By \eqref{eq:Valpha+1 C_max}, $\Phi_{\omega,\alpha}(\rho_i)=\Xi(\rho_i)$ for all $0 \leq i \leq m$.
Subtracting \eqref{eqn:Phi equality} from \eqref{eqn:Xi inequality} yields
\[
\Xi_\omega(\lambda') - \Phi_{\omega,\alpha}(\lambda') \geq  \Xi_\omega(\lambda)-\Phi_{\omega,\alpha}(\lambda). 
\]
By assumption on $\lambda$, $\Phi_{\omega,\alpha}(\lambda) < \Xi_\omega(\lambda)$.
Therefore $\Phi_{\omega,\alpha}(\lambda')<\Xi_\omega(\lambda')$.
But by definition $\rho_{j+1} \preceq \lambda'$, which contradicts the maximality of $j$ in the choice of $\lambda$.
This establishes \eqref{eq:alpha superhedge check}, i.e that $\alpha$ is a super-hedging strategy.

It remains to prove that $V_\alpha(k)(\omega)=C_{\max}(F,k)(\omega)$.
By \eqref{eq:Valpha+1 C_max} $\Phi_{\alpha,\omega}(\rho_j)=\Xi_\omega(\rho_j)$ for all $0 \leq j \leq m$.
Since $q \in M_1(\L;b)$, see \eqref{eqn:q in M_q(L;b)}, and $q$ is supported on $\{\rho_0,\dots,\rho_m\}$ we get from \eqref{eq:Eqv+1}
\[
R \cdot V_{\alpha}(k)(\omega) 
=
E_q(\Phi_{\omega,\alpha}) 
=
E_q(\Xi_{\omega}).
\]
It follows from Theorem \ref{theorem:C_max with supervertex} and from Lemma \ref{lem:products conditional general}\ref{lem:products conditional general:E in stages} that
\begin{align}\label{eq:EqXi}
E_q(\Xi_{\omega}) 
&=
E_q(\lambda \mapsto C_{\max}(F,k+1)(\omega\lambda))
\\
\nonumber
&=R^{k+1-n} \cdot E_q(\lambda \mapsto E_{q^{\otimes n-k-1}}(F(\omega\lambda-)))
\\
\nonumber
&=
R^{k+1-n} \cdot E_{q^{\otimes n-k}}(F(\omega-))
\\
\nonumber
&=R \cdot C_{\max}(F,k)(\omega).
\end{align}
We deduce that $R \cdot V_\alpha(k)(\omega) = R \cdot C_{\max}(F,k)(\omega)$.
Since $R>0$ it follows that $V_\alpha(k)(\omega) = C_{\max}(F,k)(\omega)$.
This completes the proof of Claim 2.

To complete the proof it remain to show that $\alpha$ is a minimum super-hedging strategy.
Let $\beta$ be an arbitrary super-hedging strategy, i.e \eqref{eq:beta suer hedge} holds for $\beta$.
This can be written using \eqref{eq:def Phi_beta,omega} and the right hand side of \eqref{eq:alpha superhedge check} as the following inequality for  any $0 \leq k \leq n-1$ and any $\omega \in \L^k$
\[
\Phi_{\beta,\omega}(\lambda) \geq \Xi_{\omega}(\lambda), \qquad \lambda \in \L.
\]
By \eqref{eq:Eqv+1}, the monotonicity of the expectation, and \eqref{eq:EqXi}
\[
R \cdot V_\beta(k)(\omega) =
E_q(\Phi_{\beta,\omega}) \geq
E_q(\Xi_\omega) 
=
R \cdot C_{\max}(F,k)(\omega).
\]
Together with Claim 2 we get $V_\beta(k)(\omega) \geq C_{\max}(F,K)(\omega)=V_\alpha(k)(\omega)$.
Thus, $\alpha$ is a minimum cost super-hedging strategy.
\end{proof}

\section{Residuals of a minimum cost super-hedging strategy.}\label{S:Residuals}
Let $\alpha(k)=(\alpha_0(k),\alpha_1(k),\ldots,\alpha_m(k))$, for $k=0,1,\ldots, n-1$ be a minimum cost super-hedging strategy for a contingent claim $F$ of class $\mathfrak{S}$ in our $n$-step binomial model. 

At time $k=0$, an investor with a short position in contingent claim $F$ purchases a  super-hedging portfolio which consists of assets $S_0,\dots , S_m$ using the amount $C_{\max}(F,0)$ to cover a setup cost $V_\alpha(0)$:  
\[
V_\alpha(0) = \sum_{i=0}^m \alpha_i(0) S_i(0)=C_{\max}(F,0).
\]
At time $k=1$, the asset prices change, and the value of the super-hedging portfolio becomes
\[
\sum_{i=0}^m \alpha_i(0) S_i(1).
\] 
By construction, we have
\[
\sum_{i=0}^m \alpha_i(0) S_i(1)\geq C_{\max}(F,1).
\] 
At $k=1$, the super-hedging portfolio is liquidated, and a new portfolio is purchased. The required setup cost for the new super-hedging portfolio at time $k=1$ is $C_{\max}(F,1)$. It follows from the above inequality that there exists a state of the world $\omega \in \L$ such that 
\[
\sum_{i=0}^m \alpha_i(0) S_i(1)(\omega) > C_{\max}(F,1)(\omega).
\] 
It means that at the state of the world $\omega$ at time $k=1$, the super-hedging strategy produces a positive local residual amount, the surplus between the liquidation value of the super-hedging portfolio and the required setup cost of the next super-hedging portfolio. This residual amount has to be withdrawn before the next hedging portfolio is set up. Let us generalize this discussion as follows. 
 
\begin{definition}\label{D:residual}
	A {\em local residual} of a super-hedging strategy $\alpha(k)$ at time $k$ is denoted by $\delta_{\alpha}(k)$, $k=1,\ldots,n$ and is defined as follows:
	\begin{equation}\label{Eq:residual}
		\delta_{\alpha}(k) = \sum_{i=0}^m\alpha_i(k-1)S_i(k) - C_{\max}(F,k), \end{equation} 
where $C_{\max}(F,n)\equiv F$.	 
\end{definition}
\begin{remark}\label{R:maturity}
 We impose a condition that the terminal value of the super-hedging portfolio (at $k=n$) matches the liability $F$.
\end{remark}
It follows from \eqref{E:hedge condition} that the local residuals of a super-hedging strategy are always
non-negative. Moreover, in our incomplete model for some scenarios the
local residuals are strictly positive. In such cases, the hedging portfolio
liquidation value at time $k$ exceeds the necessary setup cost of the next portfolio built at time $k$ (for $k=1,\dots, n-1$), or exceeds the contingent claim terminal value (for $k=n$).

It follows from the above discussion that a super-hedging strategy is {\em non-self-financing}. The local residuals $\delta_{\alpha}(k)$, for $k=1,\dots, n$ are withdrawn and deposited at the risk-free rate. At the contingent claim maturity time $k=n$, the {\em accumulated residual} amount of the super-hedging strategy $\alpha(k)$ is
\begin{equation}\label{Eq:accum_residual}
	\Delta_{\alpha} = \sum_{i=1}^n \delta_{\alpha}(k)R^{n-k}.
 \end{equation} 
Observe that, since $C_{\max}(F,k)$ is an arbitrage price of a contingent claim $F$ at time $k=0,\dots,n-1$, the minimum cost super-hedge $\alpha(k)$ is
an {\em arbitrage strategy}.  

\begin{remark}\label{R:explicit_residual}
	Notice that both the upper bound $C_{\max}(F,k)$ and the super-hedging strategy $\alpha(k)$ are given by explicit (non-recursive) formulas (see  \eqref{E:Cmax formula} and \eqref{E:solution}, respectively). Therefore, \eqref{Eq:residual} provides an explicit representation for the local residuals $\delta(k)$ for $k=1,\dots,n$ in terms of the model parameters and the asset price paths up to time $k$.
\end{remark}

\section{An example}\label{S:example}
The purpose of this section is to evaluate somewhat abstract formulas of the paper, first, in a simple general example, and then in a concrete numerical situation.

\subsection{Evaluation of formulas}
Consider an example of a $2$-step model with $m=2$ risky assets. Recall that
\begin{align*}
b_i &= \frac{R-D_i}{U_i-D_i} \text{ and we assume that } b_1\geq b_2.\\
q_0 &=1-b_1,\ q_1=b_1-b_2,\ q_2=b_2,\\
q_{(0,0)} &= (1-b_1)^2,\ q_{(0,1)}=q_{(1,0)}=(1-b_1)(b_1-b_2),\ q_{(0,2)}=q_{(2,0)}=(1-b_1)b_2,\\
q_{(1,1)} &= (b_1-b_2)^2,\ q_{(1,2)}=q_{(2,1)}=(b_1-b_2)b_2,\ q_{(2,2)}=b_2^2,\\
\chi_0(j)&=R,\ \chi_1(0)=D_1,\ \chi_1(1)=\chi_1(2)=U_1,\ \chi_2(0)=\chi_2(1)=D_2,\ \chi_2(2)=U_2,\\
\chi_0(J)&=R^2,\ \chi_1(0,0)=D_1^2,\chi_1(0,1)=\chi_1(1,0)=\chi_1(0,2)=\chi_1(2,0)=D_1U_1,\\
\chi_1(1,1)&=\chi_1(1,2)=\chi_1(2,1)=\chi_1(2,2)=U_1^2,\\
\chi_2(0,0)&=\chi_2(0,1)=\chi_2(1,0)=\chi_2(1,1)=D_2^2,\\ 
\chi_2(0,2)&=\chi_2(2,0)=\chi_2(1,2)=\chi_2(2,1)=D_2U_2,\ \chi_2(2,2)=U_2^2.
\end{align*}
Since the model is $2$-step, we need to compute only $C_{\max}(F,0)$ and $C_{\max}(F,1)$,
while $C_{\max}(F,2)=F$.
In the following computation we identify $\C P_1(2)$ with $\{0,1,2\}$.
\begin{align}
C_{\max}(F,1)
=&\ R^{-1}\sum_{j=0}^2 q_j\cdot h\left(\sum_{i=0}^2 \gamma_i S_i(1)\chi_i(j)\right)\nonumber \\
=&\ R^{-1} 
\Big( 
(1-b_1)\cdot h\left(\gamma_0 S_0(1)R + \gamma_1 S_1(1) D_1 + \gamma_2 S_2(1)D_2\right)\label{Eq:Cmax1}\\
&+ (b_1-b_2)\cdot h\left(\gamma_0 S_0(1)R + \gamma_1 S_1(1) U_1 + \gamma_2 S_2(1)D_2\right)\nonumber \\
& + b_2\cdot h\left(\gamma_0 S_0(1)R + \gamma_1 S_1(1) U_1 + \gamma_2 S_2(1)U_2\right)
\Big)\nonumber
\end{align}
Recall that
$
\rho_0=\left( 
\begin{smallmatrix}
0\\
0
\end{smallmatrix}
 \right),\
\rho_1=\left( 
\begin{smallmatrix}
1\\
0
\end{smallmatrix}
 \right),\
\rho_2=\left( 
\begin{smallmatrix}
1\\
1
\end{smallmatrix}
 \right)
$
and hence we get that
\begin{align*}
C_{\max}(F,1)(\rho_0) 
=&\ R^{-1}\Big( 
(1-b_1)\cdot h\left( \gamma_0 S_0(0)R^2+\gamma_1 S_1(0)D_1^2 + \gamma_2 S_2(0)D_2^2 \right)\\
&+ (b_1-b_2)\cdot h\left( \gamma_0 S_0(0)R^2 + \gamma_1 S_1(0)D_1U_1 + \gamma_2 S_2(0)D_2^2 \right)\\
&+ b_2\cdot h\left( \gamma_0 S_0(0)R^2 + \gamma_1 S_1(0)D_1U_1 + \gamma_2 S_2(0)D_2U_2 \right)\Big),
\end{align*}
\begin{align*}
C_{\max}(F,1)(\rho_1) 
=&\ R^{-1} 
\Big( 
(1-b_1)\cdot h\left(\gamma_0 S_0(1)R^2 + \gamma_1 S_1(0) D_1U_1 + \gamma_2 S_2(0)D_2^2\right)\\
&+ (b_1-b_2)\cdot h\left(\gamma_0 S_0(0)R^2 + \gamma_1 S_1(0) U_1^2 + \gamma_2 S_2(0)D_2^2\right)\\
& + b_2\cdot h\left(\gamma_0 S_0(0)R^2 + \gamma_1 S_1(0) U_1^2 + \gamma_2 S_2(0)D_2U_2\right)
\Big),
\end{align*}
\begin{align*}
C_{\max}(F,1)(\rho_2) 
=&\ R^{-1} 
\Big( 
(1-b_1)\cdot h\left(\gamma_0 S_0(1)R^2 + \gamma_1 S_1(0) D_1U_1 + \gamma_2 S_2(0)D_2U_2\right)\\
&+ (b_1-b_2)\cdot h\left(\gamma_0 S_0(0)R^2 + \gamma_1 S_1(0) U_1^2 + \gamma_2 S_2(0)D_2U_2\right)\\
& + b_2\cdot h\left(\gamma_0 S_0(0)R^2 + \gamma_1 S_1(0) U_1^2 + \gamma_2 S_2(0)U_2^2\right)
\Big).
\end{align*}
Since some terms of $C_{\max}(F,0)$ are equal for certain values of $J\in \C P_2(2)$,
the computation simplifies which is already visible in this simple example.
\begin{align*}
C_{\max}(F,0)
=&\ R^{-2}\sum_{J\in \C P_2(2)} q_J\cdot h\left( \sum_{i=0}^2 \gamma_i S_i(0)\chi_i(J) \right)\\
=&\ R^{-2}\Big(
q_{(0,0)}\cdot h\left( \gamma_0 S_0(0)R^2 + \gamma_1 S_1(0) D_1^2 + S_2(0)D_2^2 \right)\\
&+ 2q_{(0,1)}\cdot h\left( \gamma_0 S_0(0)R^2 + \gamma_1 S_1(0) D_1U_1 + S_2(0)D_2^2 \right)\\
&+ 2q_{(0,2)}\cdot h\left( \gamma_0 S_0(0)R^2 + \gamma_1 S_1(0) D_1U_1 + S_2(0)D_2U_2 \right)\\
&+ q_{(1,1)}\cdot h\left( \gamma_0 S_0(0)R^2 + \gamma_1 S_1(0) U_1^2 + S_2(0)D_2^2 \right)\\
&+ 2q_{(1,2)}\cdot h\left( \gamma_0 S_0(0)R^2 + \gamma_1 S_1(0) U_1^2 + S_2(0)D_2U_2 \right)\\
&+ q_{(2,2)}\cdot h\left( \gamma_0 S_0(0)R^2 + \gamma_1 S_1(0) U_1^2 + S_2(0)U_2^2 \right)\Big)\\
\end{align*}

\subsection{A numerical example}\label{SS:example}
Consider the following concrete case with the following parameters
\begin{align*}
S_0(0)&= S_1(0)=S_2(0)=100,\\
R=1,\ D_1&=0.8,\ D_2=0.9,\ U_1=1.1,\ U_2=1.2.
\end{align*}
Assume that $\gamma_0=-1,\ \gamma_1=\gamma_2=\frac{1}{2}$ and $h(x) = \max\{x,0\}$.
In other words, $F$ is a European basket call with strike price $K=100$.

The sample space $\C L^2$ can be conveniently identified with the
set of $(2\times 2)$-matrices with binary entries. The columns
correspond to the price ratios of both assets at a given time, while the rows
correspond to the price dynamics of individual assets. In particular, we have
\begin{align*}
S_1(1)
\left( 
\begin{smallmatrix}
0 & *\\
* & *
\end{smallmatrix}
 \right) & = 80,\
S_1(1)
\left( 
\begin{smallmatrix}
1 & *\\
* & *
\end{smallmatrix}
 \right)  = 110,\\
S_1(2)
\left( 
\begin{smallmatrix}
0 & 0\\
* & *
\end{smallmatrix}
 \right)  &= 64,\
S_1(2)
\left( 
\begin{smallmatrix}
0 & 1\\
* & *
\end{smallmatrix}
 \right) 
=
S_1(2)
\left( 
\begin{smallmatrix}
1 & 0\\
* & *
\end{smallmatrix}
 \right)  = 88,\
S_1(2)
\left( 
\begin{smallmatrix}
1 & 1\\
* & *
\end{smallmatrix}
 \right)  = 121,\\
S_2(1)
\left( 
\begin{smallmatrix}
* & *\\
0 & *
\end{smallmatrix}
 \right) & = 90,\
S_2(1)
\left( 
\begin{smallmatrix}
* & *\\
1 & *
\end{smallmatrix}
 \right)  = 120,\\
S_2(2)
\left( 
\begin{smallmatrix}
* & *\\
0 & 0
\end{smallmatrix}
 \right)  &= 81,\
S_1(2)
\left( 
\begin{smallmatrix}
* & *\\
0 & 1
\end{smallmatrix}
 \right) 
=
S_2(2)
\left( 
\begin{smallmatrix}
* & *\\
1 & 0
\end{smallmatrix}
 \right)  =108,\
S_2(2)
\left( 
\begin{smallmatrix}
* & *\\
1 & 1
\end{smallmatrix}
 \right)  = 144.
\end{align*}
Next we have $b_1=\frac{2}{3}$ and $b_2=\frac{1}{3}$, so $b_1\geq b_2$ and we do not need to
reorder the assets. The matrices $Q,\ N,$ and $T$ are given by
$$
Q=
\begin{pmatrix*}[r]
1 & 0 & 0\\
-1 & 1 & 0\\
0 & -1 & 1
\end{pmatrix*},\
N=
\begin{pmatrix*}[r]
1 & 0.8 & 0.9\\
0 & 0.3 & 0\\
0 & 0   & 0.3
\end{pmatrix*},\
N^{-1}=
\begin{pmatrix*}[r]
1 & -\frac{8}{3} & -3\\[0.3em]
0 & \frac{10}{3} & 0\\[0.3em]
0 & 0            & \frac{10}{3}
\end{pmatrix*}
$$
Let us compute the super-hedging portfolio at time $k=0$. The matrix $T_0$ is diagonal
with all entries equal to $0.01$ and we need to compute the numbers
$C_{\max}(F,1)(\rho_i)$, which are as follows
\begin{align*}
C_{\max}(F,1)(\rho_0) 
&= 0\\
C_{\max}(F,1)(\rho_1) 
&= \frac{31}{6}\\
C_{\max}(F,1)(\rho_2)&= \frac{1}{3}\left( 14.5 + 32.5 \right)=\frac{94}{6}.
\end{align*}
The product of matrices $T_0^{-1}N^{-1}Q$ is equal to
$$
\frac{1}{300}
\begin{pmatrix*}[r]
11  & 1   & -9\\
-10 & 10  & 0\\
0   & -10 & 10
\end{pmatrix*}
$$
and computing $T_0^{-1}N^{-1}Q\cdot c_*$ gives the super-hedging 
portfolio 
$$
\alpha(0)=\left( -\frac{815}{1800},\frac{310}{1800},\frac{630}{1800} \right)
$$ 
at time $k=0$.
Observe that $q_j=\frac{1}{3}$ for all $j=0,1,2$ and hence $q_J=\frac{1}{9}$ for
all $J\in \C P_2(2)$. We thus get
$$
C_{\max}(F,0) = \frac{125}{18} 
$$
and the setup cost of the super-hedging portfolio is $V_{\alpha}(0) = \frac{125}{18}$ as claimed in Theorem \ref{T:minimum cost hedgiging theorem}.
Let us verify that it is a super-hedge. Using the notation \eqref{eqn:V_beta+1} of Section \ref{S:proof}, we get
\begin{align*}
V_{\alpha}^{+1}(0)
\left(  
\begin{smallmatrix}
0\\
0
\end{smallmatrix}
\right)
&=\frac{1}{1800}\left( -815\cdot S_0(1) + 310\cdot S_1(1)D_1 + 630\cdot S_2(1)D_2 \right)\\
&=0 = C_{\max}(F,1)(\rho_0),\\
V_{\alpha}^{+1}(0)
\left(  
\begin{smallmatrix}
1\\
0
\end{smallmatrix}
\right)
&=\frac{1}{1800}\left( -815\cdot S_0(1) + 310\cdot S_1(1)U_1 + 630\cdot S_2(1)D_2 \right)\\
&=\frac{31}{6} = C_{\max}(F,1)(\rho_1),\\
V_{\alpha}^{+1}(0)
\left(  
\begin{smallmatrix}
0\\
1
\end{smallmatrix}
\right)
&=\frac{1}{1800}\left( -815\cdot S_0(1) + 310\cdot S_1(1)D_1 + 630\cdot S_2(1)U_2 \right)\\
&=10.5 \geq \frac{16}{3} = 
C_{\max}(F,1)
\left( 
\begin{smallmatrix}
0\\
1
\end{smallmatrix}
 \right),
\\
V_{\alpha}^{+1}(0)
\left(  
\begin{smallmatrix}
1\\
1
\end{smallmatrix}
\right)
&=\frac{1}{1800}\left( -815\cdot S_0(1) + 310\cdot S_1(1)U_1 + 630\cdot S_2(1)U_2 \right)\\
&=\frac{94}{6} = C_{\max}(F,1)(\rho_2).
\end{align*}
Also, as claimed in Theorem \ref{T:minimum cost hedgiging theorem}, the strategy is a super-hedge and its value for $\rho_j$ is equal to $C_{\max}(F,1)(\rho_j)$.
It follows that the local residual at $\left(\begin{smallmatrix}
	0\\
	1
\end{smallmatrix}\right)$ is given by
$$
\delta_{\alpha}(1)
\left(  
\begin{smallmatrix}
0\\
1
\end{smallmatrix}
\right)
=10.5 - \frac{16}{3} \approx 5.17
$$
and is zero at all other states of the world.

Let, for example, $\omega = \left( 
\begin{smallmatrix}
0\\
1
\end{smallmatrix}
\right)
$ be a fixed state of the world at time $k=1$ and let us analyze the behavior of
the model from that state on.
It means that $S_1(1)(\omega) = 80$
and $S_2(1)(\omega) = 120$. We have that
$$
T_{1}(\omega)=
\begin{pmatrix}
100 & 0  & 0\\
0   & 80 & 0\\
0   & 0  & 120
\end{pmatrix}
$$
and we need to compute $C_{\max}(F,2)(\omega\rho_i)$ for $i=0,1,2$. These are the
following values: 
\begin{align*}
C_{\max}(F,2)(\omega\rho_0) 
&= F\left(  
\begin{smallmatrix}
0 & 0\\
1 & 0
\end{smallmatrix}
\right)
= \left( \frac{1}{2}S_1(0)D_1^2 + \frac{1}{2}S_2(0)U_2D_2 - 100\right)=0,
\\
C_{\max}(F,2)(\omega\rho_1) 
&= F\left(  
\begin{smallmatrix}
0 & 1\\
1 & 0
\end{smallmatrix}
\right)
= \left( \frac{1}{2}S_1(0)D_1U_1 + \frac{1}{2}S_2(0)U_2D_2 - 100\right)=0,
\\
C_{\max}(F,2)(\omega\rho_2) 
&= F\left(  
\begin{smallmatrix}
0 & 1\\
1 & 1
\end{smallmatrix}
\right)
= \left( \frac{1}{2}S_1(0)D_1U_2 + \frac{1}{2}S_2(0)U_2^2 - 100\right)=16.
\end{align*}
The matrix $T_1(\omega)^{-1}N^{-1}Q$ is given by
$$
\frac{1}{1800}
\begin{pmatrix*}[r]
66  & 6   & -54\\
-75 & 75  & 0\\
0   & -50 & 50
\end{pmatrix*}
$$
and hence the hedging portfolio at time $k=1$ under scenario $\omega$ is
$$
\alpha(1)(\omega) = \left( -\frac{108}{225},0,\frac{4}{9} \right).
$$
Observe that its value is equal to $C_{\max}(F,1)(\omega)$, according to formula \eqref{Eq:Cmax1}. The local residual at $\left(  
\begin{smallmatrix}
	0 & 0\\
	1 & 1
\end{smallmatrix}
\right)$ is given by
$$
\delta_{\alpha}(2)
\left(  
\begin{smallmatrix}
0 & 0\\
1 & 1
\end{smallmatrix}
\right)
=\Phi_{\alpha}(2)
\left(  
\begin{smallmatrix}
0 & 0\\
1 & 1
\end{smallmatrix}
\right)
-
F
\left(  
\begin{smallmatrix}
0 & 0\\
1 & 1
\end{smallmatrix}
\right)
=
16 - 4 = 12
$$
and it is zero at all other states of the world; that is, for $
\left( 
\begin{smallmatrix}
0 & *\\
1 & *
\end{smallmatrix}
 \right)
$, where the second column is different from $\omega = 
\left(  
\begin{smallmatrix}
0\\
1
\end{smallmatrix}
\right)
$.

\section{Conclusion and future research}\label{S:conclusion} 
This paper continues the studies of a multi-step discrete time model of the financial market, where $m\geq 2$ risky assets each follow a binomial model, and no assumptions are made on the joint distribution of the risky asset price processes. This model is incomplete, in contrast with its classical counterpart, the Cox-Ross-Rubinstein binomial model for $m=1$. 

In the paper  \cite{MR4553406}, we derived explicit formulas for the bounds of the no-arbitrage price interval for a wide class of multi-asset contingent claims, including European basket calls and puts.

In this paper, we complement the results of \cite{MR4553406} by providing explicit formulas for the minimum cost super-hedging strategy for a contingent claim of the same class. Such a super-hedge is a non-self-financing arbitrage strategy. At each time step, it produces a non-negative local residual. We provide explicit formulas for the local residuals as well. 

The multi-asset binomial model considered in this paper is not a realistic market model. However, our results (especially the explicit formulas obtained) create a foundation for further extensions of this model to more realistic market models. In our following paper \cite{2301.04996}, we generalize our results to the case of an extended multi-asset binomial model, where risky asset price ratios are continuously distributed over bounded intervals. In the future, we plan on building efficient algorithms of optimal hedging in this extended model based on our explicit formulas. See \cite{JKS-risk,MR3612260} and references therein for similar research on optimal hedging in an extended single-asset binomial model.

\section*{}

\bibliography{bibliography}
\bibliographystyle{plain}

\end{document}